\newtheorem{theorem}{Theorem}
\newtheorem{lemma}[theorem]{Lemma}
\providecommand{\U}[1]{\protect\rule{.1in}{.1in}}
\newtheorem{thm}{Theorem}[section]
\newtheorem{cor}[thm]{Corollary}
\newtheorem{rem}{Remark}[section]
\begin{document}

\title{Optimal Transmission Policies for Energy Harvesting Two-hop Networks}
\author{\IEEEauthorblockN{Oner Orhan and Elza Erkip}\\
\IEEEauthorblockA{Dept. of ECE, Polytechnic Institute of New York University\\
oorhan01@students.poly.edu, elza@poly.edu}}
\maketitle
\vspace{-0.1in}
\begin{abstract}
In this paper, a two-hop communication system with energy harvesting nodes is considered. Unlike battery powered wireless nodes, both the source and the relay are able to harvest energy from environment during communication, therefore, both data and energy causality over the two hops need to be considered. Assuming both nodes know the harvested energies in advance, properties of optimal transmission policies to maximize the delivered data by a given deadline are identified. Using these properties, optimal power allocation and transmission schedule for the case in which both nodes harvest two energy packets is developed.
\end{abstract}
\vspace{-0.1in}
\section{Introduction}
Recent advances in energy harvesting devices allow wireless sensor networks to operate in a self-powered fashion for extended periods of time. However, due to size of sensors and low ambient energy sources, harvested energy from environment is low \cite{winston}. Hence, efficient utilization of available energy is essential to increase lifetime of sensor networks as well as to obtain maximum amount of information from sensors. Optimal transmission polices for energy harvesting nodes have recently attracted significant interest \cite{Sharma}-\cite{maria}. Sharma et al.\ \cite{Sharma} and Castiglione et al.\ \cite{elza} investigate optimal policies for stochastic energy arrivals. While \cite{Sharma} gives optimal policies under data queue constraints, \cite{elza} generalizes the framework to study power allocation for both source acquisition and transmission under distortion constraints. Other line of work, such as \cite{Yang2010}-\cite{deniz2} studies off-line transmission policies, where the harvested energies are known in a non-causal fashion at the transmitters. The scenarios investigated include single link \cite{Yang2010}, single link with varying channel SNR \cite{chin}, single link fading channel \cite{fade}, multiple access channel \cite{multi}, broadcast channel \cite{broad}, orthogonal half-duplex relay channel \cite{Huang}, and two-hop \cite{deniz} networks. Similarly, single energy harvesting link under battery imperfections is investigated in \cite{Yener} and \cite{deniz2}. The non-causal knowledge of the energy harvesting process may be applicable for predictable energy models (such as \cite{maria}), or more generally allows one to obtain upper bounds of performance.

In this paper, we focus on a two-hop energy harvesting wireless network with a half-duplex relay. Since one of the main sources of energy consumption of a node is the power amplifier \cite{ECWN}, we assume that harvested energies are only used for transmission purposes. We also assume that both the source and the relay know the harvested energies in advance, and that energies arrive with arbitrary amounts at arbitrary times. Moreover, both the source and the relay have infinite size battery and data buffer. Our goal is to find an optimal transmission policy (which includes scheduling and power allocation), subject to energy causality constraints at the source and the relay and data causality constraint at the relay, that maximizes total transmitted data from the source to the destination until a given deadline \emph{T}. Note that \cite{deniz} also studies two-hop communication with a half duplex relay; however only single energy arrival at the source is considered, significantly simplifying the problem. Our framework is more general, in that, we identify properties of the optimal transmission policy for the multi-energy arrival case both at the source and the relay. Then, using these properties, we provide the optimal policy for two energy arrival case. Optimal transmission policy for the multi-arrival case builds upon the two-arrival solution will be discussed in future work.

In the next section, we describe the general system model. In Section \ref{Properties of Optimal Scheduling and Power Allocation}, we describe some properties of the optimal transmission policy. In Section IV, we provide the solution for two energy arrivals to the source and the relay. In Section V, simulation results are presented. We conclude in Section VI.
\vspace{-0.1in}
\section{System Model}
\label{system_model}
We consider a two-hop communication system with energy harvesting source and relay as shown in Figure \ref{fig 1}. The relay is half-duplex. Communication takes place until the deadline T. The source and the relay harvest energies with amounts $E_{s,i}>0$ and $E_{r,j}>0$ at times $t_{s,0}<...<t_{s,M}$ and $t_{r,0}<...<t_{r,N}$, respectively. We set $t_{s,0}=t_{r,0}=0$ and $t_{s,M}=t_{r,N}=T$. Inter-arrival times of energy packets are denoted as $\tau_{s,i}=t_{s,i}-t_{s,{i-1}}$ and $\tau_{r,j}=t_{r,j}-t_{r,{j-1}}$ for all $t_{s,i}, t_{r,j}\in[0,T]$ for the source and the relay, respectively. We assume that the energy arrival profiles are known to all nodes in the network prior to transmission. In addition, it is assumed that each node has an infinite size, ideal battery (instantly chargeable and no leakage) and they consume harvested energies only for transmission purposes. Furthermore, relay has unlimited data buffer.
\begin{figure}[!t]
\center
\includegraphics[scale=1,trim= 0 10 15 0]{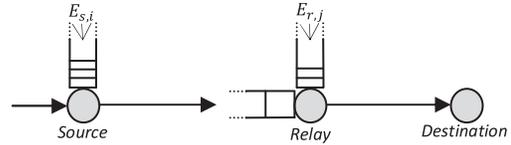} 
\caption{Two-hop system with energy harvesting source and relay}%
\label{fig 1}%
\vspace{-0.2in}
\end{figure}
We consider a non-negative, strictly concave and monotonically increasing power-rate function for each terminal, \emph{R=g(p)}, where $p$ is the instantaneous power \cite{antepli}. These properties are satisfied in many widely used power-rate models, for example, Shannon capacity $g(p)=\log(1+\lvert h\rvert^2p)$ for complex additive white Gaussian noise channel with channel gain $h$ and unit noise variance. We assume that the source and the relay are allowed to change their data rate instantaneously by changing their power, \emph{p(t)}. The source-relay and the relay-destination channel gains $h_s$ and $h_r$ are in general different. However, since the amplitude of the channel gain can be normalized by scaling the harvested energy, without loss of generality, we assume that each link has unity gain and therefore characterized by the same power-rate function.

The transmission {\em schedule} indicates which node is transmitting at a given time. Recall that half-duplex constraint suggests the source and the relay cannot transmit simultaneously. {\em Epoch} $i$ of the source denotes the transmission period of the source when it is continuously scheduled for the $i$'th time. Duration of the epoch $i$ is denoted by $\xi_{s,i}$. Similarly, $\xi_{r,i}$ denotes the duration of $i$'th epoch of the relay. The power allocation functions of the source and the relay depend on harvested energies and are represented by $p_s(t)$ and $p_r(t)$, respectively. Due to half-duplex constraint $p_s(t)p_r(t)=0$ and the transmission schedule can be inferred from the power allocation functions $p_s(t)$ and $p_r(t)$. The total transmitted data by the source and the relay up to time $t$ can be expressed as $B_s(t)=\int_{0}^{t}{g(p_s(\tau))d\tau}$ and $B_r(t)=\int_{0}^{t}{g(p_r(\tau))d\tau}$, while the consumed energies are $E_s(t)=\int_{0}^{t}{p_s(\tau)d\tau}$ and $E_r(t)=\int_{0}^{t}{p_r(\tau)d\tau}$. A {\em transmission policy} refers to the pair of power allocation functions $(p_s(t), p_r(t))$.

A transmission policy should satisfy energy and data causality constraints in order to be feasible. The feasible set of transmission policies can be represented as
\vspace{-0.05in}
\begin{eqnarray}
\label{eq 2}
\mathfrak{F}&=&\{p_s(t),p_r(t)|p_s(t)\geq 0, p_r(t)\geq 0, p_s(t)p_r(t)=0,\nonumber \\
            &&E_s(t)\leq\sum_{i:0\leq t_{s,i}<t}^{}{E_{s,i}}, E_r(t)\leq\sum_{i:0\leq t_{r,i}<t}{E_{r,i}}, \nonumber \\
            &&B_r(t)\leq B_s(t), t\in[0,T]\} \nonumber
\vspace{-0.05in}
\end{eqnarray}
Our goal is to maximize the throughput as in \cite{Yener}, which is equivalent to maximizing the total transmitted data from the source to the destination until the deadline \emph{T}. The corresponding optimization problem can be formulated as
\begin{eqnarray}
\label{prob 1}
\text{max}\; \; B_r(T)\; \; \;  \text{subject to}\;\; (p_s(t), p_r(t))\in \mathfrak{F}
\end{eqnarray}

In the next section, we will provide some basic properties of the optimal transmission policy which will then be used in Section \ref{optimization of specific cases} to derive an optimal policy for $M=N=2$.
\vspace{-0.1in}
\section{Properties of Optimal Transmission Policy}
\label{Properties of Optimal Scheduling and Power Allocation}
\begin{lemma}
\label{lemma 1}
In a single epoch, constant power transmission is optimal between energy harvests.
\end{lemma}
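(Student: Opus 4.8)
The plan is to use a standard exchange/averaging argument based on the strict concavity of the power-rate function $g(\cdot)$. Fix a single epoch of one node (say the source, the relay case being identical), and suppose that between two consecutive energy harvest instants the node transmits with a power allocation $p(t)$ that is \emph{not} constant on that sub-interval $[a,b]$. I would then replace $p(t)$ on $[a,b]$ by its time-average $\bar{p} = \frac{1}{b-a}\int_a^b p(\tau)\,d\tau$ and show this new policy is (i) still feasible and (ii) transmits at least as much data, with strict improvement unless $p(t)$ was already constant a.e.

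The key steps, in order, are the following. First, feasibility of the averaged policy: since the total energy consumed on $[a,b]$ is unchanged, $\int_a^b \bar p\,d\tau = \int_a^b p(\tau)\,d\tau$, and there is no new energy harvest strictly inside $[a,b]$, the energy causality constraint $E_s(t)\le \sum_{i:t_{s,i}<t} E_{s,i}$ continues to hold for all $t$ — the cumulative consumed energy under $\bar p$ never exceeds that under $p$ before $b$ (because a flat profile with the same integral lies below the running integral of any profile that starts higher, and in any case the binding constraint value does not change until the next harvest at or after $b$). Likewise the schedule (who transmits when) is untouched, so the half-duplex constraint $p_s(t)p_r(t)=0$ is preserved, and the data-causality constraint $B_r(t)\le B_s(t)$ is handled by noting we can apply the same averaging to whichever node is active; I would argue that increasing $B_s$ on the source's epoch only relaxes the relay's constraint, and symmetrically for the relay we use that its epoch lies within an interval where $B_s$ is already fixed. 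Second, the improvement: by Jensen's inequality applied to the strictly concave $g$,
\[
\int_a^b g(p(\tau))\,d\tau \;\le\; (b-a)\, g\!\left(\frac{1}{b-a}\int_a^b p(\tau)\,d\tau\right) \;=\; \int_a^b g(\bar p)\,d\tau,
\]
with equality iff $p$ is constant a.e. on $[a,b]$. Hence the delivered data $B_r(T)$ does not decrease, and any optimal policy must use constant power between consecutive harvests within an epoch.

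The main obstacle I anticipate is the data-causality bookkeeping: averaging the source's power raises $B_s(t)$ in the interior of $[a,b]$ but could in principle \emph{lower} it at intermediate times relative to the original, which is exactly the quantity the relay's constraint $B_r(t)\le B_s(t)$ depends on. I would address this by the same ``flat profile lies below the running integral'' observation used for energy: if $p$ is not constant, $B_s$ under $\bar p$ is a straight line from $B_s(a)$ to $B_s(b)$, and one must check the relay never needs more data than this line supplies; if it did at some point, one can instead argue via a limiting/perturbation version of the averaging (average only over a maximal sub-interval where the relay constraint is slack, or invoke that at optimality the relay is not data-starved within the source's own transmission epoch). The cleanest route is to observe that within a \emph{single} source epoch the relay is idle by the half-duplex constraint, so $B_s(t)$ only needs to dominate $B_r$ evaluated at the epoch's start, which the averaged profile satisfies since $B_s(b)$ is unchanged; the symmetric statement handles the relay's epochs. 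This reduces the whole argument to the one-line Jensen inequality above.
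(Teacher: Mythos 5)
Your core argument --- replace a non-constant power profile between consecutive harvests by its time average and invoke Jensen's inequality for the strictly concave $g$ --- is exactly the paper's argument; the paper states it in one line and defers the explicit single-link computation to \cite[Lemma 2]{Yang2010}. Your energy-causality bookkeeping is also correct: since no harvest occurs strictly inside $[a,b]$, the right-hand side of the energy constraint is constant on $(a,b]$, so only the endpoint consumption $E_s(b)$ matters, and that is unchanged by averaging. The source-epoch data-causality argument is likewise fine, since the relay is idle there, $B_r$ is frozen, and $B_s(b)$ only increases.

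The one step that does not go through is the claim that ``the symmetric statement handles the relay's epochs.'' The two cases are not symmetric: for the relay the dangerous direction is an \emph{increase} in $B_r$, which is precisely what Jensen delivers. During a relay epoch the source is silent, so $B_s(t)$ is frozen at its value at the epoch's start; if the original policy has $B_r=B_s$ at the end of that epoch (which, by Corollary \ref{Corollary 1}, is exactly what happens at the end of a relay epoch in an optimal policy), then averaging strictly increases $B_r$ there and violates $B_r(t)\le B_s(t)$. The repair is a slightly different exchange: instead of the time-average $\bar p$, use the smaller constant power $p'$ defined by $g(p')(b-a)=\int_a^b g(p_r(\tau))\,d\tau$. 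Strict concavity and monotonicity of $g$ give $p'<\bar p$ unless $p_r$ was already constant, so this delivers the \emph{same} data (hence preserves data causality) while strictly saving energy, which can then be spent in a later relay epoch (cf.\ Lemma \ref{lemma 2}) without decreasing $B_r(T)$. With that case added, your proof is complete and matches the paper's intended argument.
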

\begin{proof}
This follows from concavity of $g(p)$ and by Jensen's inequality \cite{info theory}; an explicit proof is given in \cite[Lemma 2]{Yang2010} for single link system.
\end{proof}

\begin{rem}
\label{remark 1}
Lemma 1 suggests that without loss of generality we can consider power allocation functions $p_s(t)$ and $p_{r}(t)$ that consist of a sequence of nonzero constant power levels $p_{s,i}$ and $p_{r,j}$, $i=1,...,m$, $j=1,...,n$, respectively. We assume that the power level $p_{s,i}$ $(p_{r,j})$ has duration $l_{s,i}$ $(l_{r,j})$.
\end{rem}

\begin{lemma}
\label{lemma 2}
It is optimal to deplete all batteries by the deadline T.
\end{lemma}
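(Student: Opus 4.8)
The plan is a contradiction argument of exchange type. Suppose $(p_s^\star,p_r^\star)$ is optimal yet some battery still holds energy at time $T$. By Lemma~\ref{lemma 1} and Remark~\ref{remark 1} we may take the policy to be a finite sequence of constant-power epochs, and as a preliminary normalization we may assume the last transmission epoch is the relay's: a trailing source epoch sends data that is never forwarded, so deleting it changes neither $B_r(T)$ nor feasibility of the pair (on that final interval $B_r$ is already flat, so shrinking $B_s$ there cannot break $B_r(t)\le B_s(t)$).

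Step~1 (remove unused source energy). If the source's battery holds $\delta>0$ at $T$, increase the constant power of the source's last epoch so that exactly $\delta$ further units are spent. Source energy causality still holds, the relay's profile and the whole schedule are untouched, and $B_s(\cdot)$ only grows, which merely loosens the data-causality constraint $B_r(t)\le B_s(t)$; hence the new pair is feasible with the same $B_r(T)$, and we may assume the source battery is empty. If $\delta>0$ was actually burned, $B_s(T)$ now strictly exceeds $B_r(T)$, i.e.\ the relay carries a strictly positive backlog at the deadline.

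Step~2 (remove unused relay energy). If the relay's battery holds $\Delta>0$ at $T$, raise the power of the relay's last epoch to spend a small portion of $\Delta$. Since $B_s(t)-B_r^\star(t)$ is continuous and strictly positive at $t=T$, a sufficiently small such increase preserves $B_r(t)\le B_s(t)$ for all $t$; relay energy causality is preserved; and by strict monotonicity of $g$ we get $B_r(T)$ strictly larger, contradicting optimality. Hence both batteries are empty at $T$.

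The crux --- and the reason the steps must be taken in this order --- is the interaction of the two causality constraints: burning leftover energy in the last epoch is ``free'' for the source because it only relaxes a constraint, but it is useful for the relay only when the relay still has data to forward; a relay whose buffer is empty ($B_r(T)=B_s(T)$) simply cannot absorb extra energy. Running the source step first is exactly what creates the backlog that makes any residual relay energy exploitable. The remaining care needed is the quantitative one flagged in Step~2 --- choosing the relay's power bump small enough that $B_r(t)\le B_s(t)$ survives at every $t$, not just at $T$ --- which follows from continuity of $B_s-B_r^\star$ together with its strict positivity at the deadline established in Step~1.
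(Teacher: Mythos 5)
Your overall strategy --- burn leftover source energy first so as to create slack in the data-causality constraint, then use that slack to burn leftover relay energy and contradict optimality --- is considerably more explicit than what the paper does (the paper's proof is a one-line citation of \cite[Lemma 5]{Yener}, a single-link result, together with the assertion that the extension is simple), and your observation that the two batteries must be treated in this order is precisely what makes the two-hop case nontrivial. Your quantitative worry in Step~2 is also resolvable exactly as you suggest: within the relay's final epoch $B_s$ is constant and $B_r$ is nondecreasing, so $B_s-B_r$ attains its minimum over that epoch at $t=T$, and strict positivity there suffices.

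There are, however, two genuine gaps. First, Step~1 asserts that source energy causality "still holds" after raising the power of the source's \emph{last} epoch by enough to burn the residual $\delta$. That residual is measured at $T$, and part of it may have been harvested \emph{after} the source's last epoch ended (e.g.\ an arrival $t_{s,i}$ falling inside the final relay epoch); such energy cannot be spent in that epoch without violating $E_s(t)\le\sum_{i:0\leq t_{s,i}<t}E_{s,i}$, and spending it later requires carving transmission time away from a relay that may still have backlog, which can strictly decrease $B_r(T)$. Second, Step~2 needs the strict backlog $B_s(T)>B_r(T)$, which you establish only when $\delta>0$ was actually burned. If the source battery is already empty, the relay holds surplus energy, and $B_r(T)=B_s(T)$, your argument produces no contradiction --- and indeed none exists: a relay with an empty data buffer cannot spend more energy without violating $B_r(t)\le B_s(t)$, so in that configuration no feasible policy depletes the relay battery (take a source with little energy and a relay with much more than it needs to forward everything). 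Both gaps point to the same issue: the lemma is only true if "deplete" is restricted to energy that is actually exploitable, so your proof should either state that qualification explicitly or show that these configurations cannot occur in the regime where the lemma is later invoked.
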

\begin{proof}
This is proved in \cite[Lemma 5]{Yener} for a single energy harvesting link with finite capacity battery. The proof can be simply extended to our model.
\end{proof}

The following remarks will be useful in proving the subsequent lemmas.

\begin{rem}
\label{remark 2}
Given any feasible transmission policy, delaying an epoch of the relay does not violate feasibility (provided half-duplex constraint is still satisfied) because postponing the relay transmission allows it to store more energy. Similarly, moving an epoch of the source to an earlier time (subject to half-duplex and source energy causality constraints) does not violate data causality. Moreover, the above argument still holds if an epoch of the relay is delayed to switch with a later epoch of the source.
\end{rem}

\begin{rem}
\label{remark 3}
Consider an optimal transmission policy. We can obtain another feasible policy by moving all source epochs to earlier time instants (possibly by delaying relay epochs) provided energy causality of the source is maintained. Using Remark \ref{remark 2} we can argue that the new transmission policy maintains optimality. Hence, we will restrict attention to policies for which the source transmits whenever it has energy.
\end{rem}

\begin{lemma}
\label{lemma 3}
Given any feasible transmission policy for which $p_s(t)=p_r(t)=0$ for some $t\in[0,T]$, we can find another feasible policy such that it transmits at least as much data by fully utilizing channel, that is never turning off both the source and the relay.
\end{lemma}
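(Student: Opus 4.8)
The plan is to take a feasible policy with an idle interval—some time window during which neither the source nor the relay transmits—and modify it into a feasible policy with no idle time that delivers at least as much data to the destination. The natural strategy is to argue that idle time can always be "pushed to the end'' of the schedule, and then eliminated by spreading transmission over the newly available time. Concretely, I would first invoke Remark \ref{remark 3}: without loss of generality the source transmits whenever it has energy, so any interval on which the source is silent must be an interval on which the source has already exhausted all energy harvested so far (otherwise we could move a source epoch into it). Likewise, by Remark \ref{remark 2}, any idle interval adjacent to a relay epoch can be absorbed by delaying that relay epoch backward into the idle slot—this only helps the relay, since it gets more time and (weakly) more stored energy.

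The key steps, in order, would be: (i) Pick the earliest idle interval $[t_1,t_2]$ with $p_s(t)=p_r(t)=0$ on it. (ii) Show that just before $t_1$ either the source or the relay was transmitting (or $t_1=0$), and characterize which constraints are tight there. (iii) If a relay epoch follows the idle interval, slide it earlier to start at $t_1$ (Remark \ref{remark 2}); this strictly shrinks the idle interval without changing the data delivered and without violating data or energy causality. (iv) If instead the source is out of energy on $[t_1,t_2]$ and the next event is a source energy arrival at $t_2$, then the relay—if it has any buffered but untransmitted data, i.e. $B_r(t_1)<B_s(t_1)$—should transmit during $[t_1,t_2]$ using whatever energy it has available; this strictly increases $B_r$ while keeping the source schedule untouched and respecting data causality since $B_r$ stays below $B_s$. (v) If the relay also has no useful data to send and no energy, then in fact the source should have been allowed to harvest and the whole prefix can be discarded or the deadline argument (Lemma \ref{lemma 2}) forces energy to be reallocated. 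Iterating this procedure on successive idle intervals terminates (finitely many epochs and energy arrivals by Remark \ref{remark 1}) and yields a policy with no idle time and data delivery at least as large.

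The main obstacle I expect is the bookkeeping in the case where both nodes are genuinely stuck during $[t_1,t_2]$—the source has spent everything and the relay has neither spare energy nor backlogged data. One must argue this situation cannot persist in an optimal (or optimality-preserving) policy: either the source was transmitting too aggressively earlier and some of its transmission should be deferred into $[t_1,t_2]$ (which by concavity of $g$ via Lemma \ref{lemma 1} does not decrease throughput and can be arranged to feed the relay later), or energy that the relay harvested before $t_1$ was consumed too early. Making this rigorous requires carefully tracking the joint energy and data states at $t_1$ and invoking Remark \ref{remark 3} together with Lemma \ref{lemma 1} to rebalance source power across the idle gap. I would handle it by a small case split on whether $B_r(t_1)<B_s(t_1)$ and on which node, if any, holds unused energy at $t_1^-$, showing in each branch that a feasible modification removing the idle interval exists without loss of delivered data.
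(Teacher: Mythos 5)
Your overall strategy (eliminate idle intervals one at a time, filling each with transmission from whichever node can supply it) is in the right spirit, but two of your steps do not go through as written. First, in step (iii) you slide a relay epoch \emph{earlier} into the idle interval and cite Remark \ref{remark 2} for feasibility. Remark \ref{remark 2} only licenses the opposite moves: delaying relay epochs and advancing source epochs. Advancing a relay epoch can violate the relay's energy causality --- the relay may be idle on $[t_1,t_2]$ precisely because it is waiting for an energy arrival at some $t_{r,j}\in(t_1,t_2]$ --- so this step needs a different justification or a different move altogether.

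Second, and more importantly, your ``main obstacle'' case (source has exhausted its harvested energy at $t_1$, relay has neither spare energy nor backlogged data) is exactly where your plan stalls, and you leave it unresolved. The paper's proof avoids this case entirely by never asking either node for \emph{fresh} resources: it takes the source epoch that precedes the idle interval (every idle interval has one, possibly separated from it by relay epochs, since the first epoch must belong to the source), pushes any intervening relay epochs \emph{later} so that the idle gap sits immediately after that source epoch (this is the legal direction of Remark \ref{remark 2}), and then stretches that source epoch over the gap --- same consumed energy, lower power, longer duration. By concavity of $g$ this weakly increases the data sent, source energy causality only improves because energy is spent more slowly, and data causality is preserved because the delayed relay epochs now begin after the stretched source epoch has delivered at least as much data as before. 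Your proof would close if you replaced steps (iii)--(v) with this single uniform move; as written, the case analysis is both partly incorrect in (iii) and incomplete in (v).
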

\begin{proof}
This is argued in \cite{antepli} for a single link using properties of $g(p)$. To extend it to the two-hop case we define an idle time as a period for which $p_s(t)=p_r(t)=0$. For any feasible transmission policy, if we have an idle time right after an epoch of the source, we can invoke arguments similar to \cite{antepli} to extend the epoch of the source and ensure that there is no idle time. Note that this strategy continues to satisfy energy and data causality constraints. On the other hand, if an idle time occurs after an epoch of the relay preceded by a source epoch, the relay epoch can de delayed without violating feasibility (see Remark \ref{remark 2}), and the previous argument can be used to extend the source epoch to result in no idle time. For the case of multiple consecutive relay epochs this argument can be applied repeatedly to each epoch.
\end{proof}

\begin{cor}
\label{Corollary 1}
In an optimal transmission policy, the following properties are satisfied:
\begin{enumerate}
\item Epoches that belong to same node are not adjacent, that is, each epoch of the source is followed by an epoch of the relay.
\item Source and relay batteries cannot be empty simultaneously for any $t\in[0,T]$.
\item A new epoch of the source starts immediately after the data buffer of the relay becomes empty.
\end{enumerate}
\end{cor}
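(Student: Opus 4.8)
The plan is to read all three properties off a single optimal policy which, by Lemma~\ref{lemma 3}, never turns both nodes off, so that $[0,T]$ is organized into consecutive epochs with no gaps, each carrying a nonzero constant power by Remark~\ref{remark 1}. Property~2 is then immediate: if both batteries were empty at some time $t$, no energy would be available to either node until the next harvest, forcing $p_s=p_r=0$ on a nondegenerate interval, i.e.\ an idle time, contradicting the no-idle structure.

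For property~1 I would argue as follows. By the definition of an epoch as a maximal interval during which one node is continuously scheduled, two epochs of the same node can never be adjacent, so together with the no-idle property the epochs must alternate between source and relay. The first epoch belongs to the source, since $B_r(0)\le B_s(0)=0$ means the relay has nothing to forward at the start (also consistent with Remark~\ref{remark 3}). The last epoch cannot belong to the source, because the data transmitted in it is never relayed and hence does not increase $B_r(T)$; invoking Remark~\ref{remark 2} to delay the preceding relay epoch past this final source epoch (or simply reassigning that channel time to the relay) yields an equally good policy whose last epoch belongs to the relay. Hence every source epoch is immediately followed by a relay epoch.

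Property~3 I would prove by tracking the relay buffer occupancy $D(t):=B_s(t)-B_r(t)\ge 0$. On a source epoch $p_r=0$ while $p_s>0$, so $D$ is strictly increasing there; on a relay epoch $p_s=0$ while $p_r>0$, so $D$ is strictly decreasing there. Since the epoch preceding any relay epoch is a source epoch (property~1), $D$ is strictly positive at the start of every relay epoch, hence strictly positive throughout the interior of every epoch; consequently $D$ can vanish only at a boundary point shared by two epochs (or at $t=0$, $T$), and the instants at which $D$ drops to zero are exactly the right endpoints of relay epochs. By property~1 and the no-idle property, each such endpoint that is not the deadline is the left endpoint of a new source epoch, which is precisely the claim.

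The main obstacle is property~3: one must verify that the relay buffer cannot be exhausted in the interior of an epoch — which is exactly where the within-epoch monotonicity of $D$ is used — and must treat the horizon endpoints $t=0$ and $t=T$ separately. A secondary point underlying all three statements is to fix at the outset, via Lemma~\ref{lemma 3}, a no-idle optimal policy on which the properties are read off, since without doing so property~2 and the endpoint part of property~1 cannot be asserted literally.
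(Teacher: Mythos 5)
Your argument is correct and follows exactly the route the paper intends: the paper states Corollary~\ref{Corollary 1} with no proof of its own, presenting it as an immediate consequence of the no-idle property of Lemma~\ref{lemma 3}, the maximality built into the definition of an epoch, Remark~\ref{remark 3}, and the data-causality constraint $B_r(t)\leq B_s(t)$ --- which are precisely the ingredients you use for properties 1, 2 and 3 respectively. The only point worth making explicit is the one you already flag: ruling out the buffer emptying in the \emph{interior} of a relay epoch requires combining the strict decrease of $D(t)=B_s(t)-B_r(t)$ there with feasibility ($D\geq 0$ would be violated immediately afterwards), and the last-epoch part of property 1 is, like several of the paper's lemmas, a without-loss-of-generality replacement argument rather than a property of every optimal policy.
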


\begin{lemma}
\label{lemma 4}
In an optimal transmission policy, source and relay power levels form a non-decreasing sequence in time that is $p_{s,1}\leq p_{s,2}\leq...\leq p_{s,m}$ and $p_{r,1}\leq p_{r,2}\leq...\leq p_{r,n}$.
\end{lemma}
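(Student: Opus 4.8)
The plan is to argue by contradiction with a local energy--deferral exchange, carried out separately for the source power sequence $p_{s,1},\dots,p_{s,m}$ and for the relay power sequence $p_{r,1},\dots,p_{r,n}$. Suppose that in some optimal policy two consecutive power levels of the same node strictly decrease in time, say $p_{s,k}>p_{s,k+1}$ with durations $l_{s,k}$ and $l_{s,k+1}$ (the relay case is notationally identical). By Remark~\ref{remark 1} these are two constant-power sub-intervals, and by the alternation property in Corollary~\ref{Corollary 1} they are separated either by an energy arrival at that node (if they lie in the same epoch) or by exactly one epoch of the other node (if they lie in consecutive epochs). Form a new policy by transferring a small amount of energy $\delta>0$ from the earlier sub-interval to the later one, i.e.\ replace $p_{s,k}$ by $p_{s,k}-\delta/l_{s,k}$ and $p_{s,k+1}$ by $p_{s,k+1}+\delta/l_{s,k+1}$, leaving the schedule, all epoch boundaries, every other power level, and the other node's policy unchanged; for $\delta$ small the two levels stay positive and still satisfy $p_{s,k}\ge p_{s,k+1}$.

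The feasibility check has three parts. Energy causality at the modified node is immediate, because we only postpone consumption of already-harvested energy, so the cumulative consumed energy can only drop on $[0,T)$ and is unchanged at $T$. The half-duplex constraint and the overall schedule are untouched. Data causality $B_r(t)\le B_s(t)$ is the delicate part. When the descent is in the \emph{relay} sequence, the exchange only lowers $B_r$ on the earlier sub-interval (which cannot violate $B_r\le B_s$) and raises it by an $O(\delta)$ amount on the later one; since a transmitting relay cannot draw from an empty buffer, $B_s(t)-B_r(t)>0$ holds throughout the interior of every relay epoch, so a small enough $\delta$ preserves the constraint, and by strict concavity of $g$ (hence $g'(p_{r,k+1})>g'(p_{r,k})$) the exchange strictly increases $B_r(T)$, contradicting optimality. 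When the descent is in the \emph{source} sequence, the exchange temporarily lowers $B_s$ on the earlier sub-interval by an $O(\delta)$ amount, so one must verify the relay buffer $B_s(t)-B_r(t)$ stays bounded away from zero wherever $B_s$ was lowered: inside a source epoch the relay is silent, so the buffer is non-decreasing there and, by the no-idle structure of Lemma~\ref{lemma 3} together with Corollary~\ref{Corollary 1}, positive; inside the interleaved relay epoch it is positive in the interior for the same reason as before. Thus a small $\delta$ keeps the source exchange feasible without decreasing $B_r(T)$, and iterating it lets us assume an optimal policy has non-decreasing source powers, in the same sense in which Remark~\ref{remark 3} restricts attention to a subclass of optimal policies.

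I expect the genuine obstacle to be the last part for the source sequence, specifically the boundary case flagged by Corollary~\ref{Corollary 1}: a relay epoch that terminates \emph{exactly} when its buffer becomes empty, immediately before a new source epoch. There the buffer is zero at that switch instant, so lowering $B_s$ at the end of the preceding source epoch by any positive amount momentarily violates $B_r(t)\le B_s(t)$ near the end of that relay epoch, and the source-only exchange is blocked. To resolve it I would couple the source deferral with a matching reduction of the relay's transmitted data over that same relay epoch --- legitimate because Remark~\ref{remark 2} permits relay epochs to be reshaped and delayed --- so that the buffer stays nonnegative throughout and the relay energy thereby freed is pushed into a later relay epoch; one then checks that the net effect does not decrease $B_r(T)$ and that this adjustment does not cascade into an infeasibility downstream. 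Ensuring termination of the iteration is the remaining bookkeeping.
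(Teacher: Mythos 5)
There is a genuine gap in your relay case, which you assert goes through but in fact fails for the same structural reason you correctly flag on the source side. Your exchange is not local: transferring $\delta$ of energy from the sub-interval at power $p_{r,k}$ to the one at power $p_{r,k+1}$ raises the cumulative curve $B_r(t)$ by a strictly positive amount $\Delta=(g'(p_{r,k+1})-g'(p_{r,k}))\delta+O(\delta^2)$ for \emph{every} $t$ at or beyond the end of the second sub-interval, not just inside it. Your feasibility claim rests on $B_s(t)-B_r(t)>0$ in the interior of relay epochs, but the binding points are the epoch endpoints and everything downstream, where the buffer can be exactly zero (Corollary \ref{Corollary 1}, item 3). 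Worst of all, in any optimal policy one must have $B_r(T)=B_s(T)$ (this is exactly what Lemma \ref{lemma 8} establishes; otherwise the policy is improvable), so a modification that strictly increases $B_r(T)$ while leaving the source untouched is \emph{never} feasible, no matter how small $\delta$ is. Your contradiction therefore does not materialize. The paper's proof closes this by coupling the relay equalization with a source-side repair: it lengthens the preceding source epoch at constant consumed energy (which strictly raises $B_s$ by concavity) while shortening the first relay epoch, restoring data causality and still yielding a strict improvement. Some step of this kind is unavoidable in your argument as well.

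On the other parts: your within-epoch exchange is fine (indeed, during a source epoch $B_r$ is constant and the perturbed $B_s$ is still non-decreasing, so causality inside the epoch is automatic; the paper instead invokes the time-minimization argument of Yang--Ulukus plus Lemma \ref{lemma 3}), and you correctly identify the obstacle for consecutive source epochs when the intervening relay epoch ends with an empty buffer; your proposed fix of reshaping/deferring that relay epoch is essentially the paper's use of Remark \ref{remark 2}, though as written it remains a sketch whose termination and non-degradation of $B_r(T)$ you acknowledge are unchecked.
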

\begin{proof}
First, we consider source power levels within an epoch. For any transmission policy, keeping the consumed energy and duration of each source epoch the same, the source-relay communication in that epoch can be thought of as a single energy harvesting link. Using \cite[Lemma 1]{Yang2010} where transmission time minimization problem is considered, we can argue that by having monotonically increasing power levels in that epoch, the duration of the epoch can be decreased while the amount of data transmitted and total energy consumed in that epoch remains the same. This, however, leads to an idle time, which can be removed by Lemma \ref{lemma 3}. The same argument is valid for the relay as well.

Next, we consider source power levels in successive epochs. Consider power levels $p_{s,i}$ and $p_{s,i+1}$ in two successive epochs of the source (recall that zero power levels are not considered in $p_{s,i}$). Assume $p_{s,i}>p_{s,i+1}$. Then, by keeping $l_{s,i}$ and $l_{s,i+1}$ the same and equalizing power levels, the source transmits strictly more total data due to strict concavity of $g(p)$. However, the transmitted data in the first epoch decreases due to decrease in $p_{s,i}$. Therefore, the new policy may violate data causality in the following relay epoch. By delaying the relay epoch (Remark \ref{remark 2}), we can find another transmission policy which does not violate data causality. Similarly, for two successive relay epochs assuming $p_{r,i}>p_{r,i+1}$, the relay transmits strictly more data by equalizing power levels. Unlike source, decrement of transmitted data in the first epoch is always feasible. However, increase in the total transmitted data by the relay may violate data causality in the following epochs. In that case, the new policy can be further replaced by another one which is obtained by increasing duration of preceding source epoch while keeping consumed energy the same and decreasing duration of the first relay epoch while keeping power levels equal. Clearly, the new policy transmits strictly more data. Finally note that after equalizing power levels of successive epochs, if necessary, power levels in each epoch can be further arranged to ensure that power does not decrease.
\end{proof}

\begin{lemma}
\label{lemma 5}
Any feasible transmission policy can be replaced with one that satisfies the following property without decreasing the data rate: Within an epoch of source (relay), whenever the source (relay) power changes, total consumed source (relay) energy up to that point becomes equal to total harvested energy.
\end{lemma}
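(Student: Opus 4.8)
The plan is to argue by a local exchange argument, showing that if a power change occurs within a source (relay) epoch at a point where the consumed energy strictly lags behind the harvested energy, we can modify the policy to remove this "premature" change without decreasing throughput. First I would set up the situation: consider an epoch of the source (the relay case is symmetric) containing consecutive constant power levels $p_{s,i}$ and $p_{s,i+1}$ with durations $l_{s,i}$ and $l_{s,i+1}$, and suppose the power changes at the boundary between them, yet at that boundary the cumulative consumed source energy is strictly less than the cumulative harvested source energy (so the change is not forced by an energy arrival). By Lemma \ref{lemma 4} we know $p_{s,i}\le p_{s,i+1}$, so the change is an increase; the idea is to push some of the later, higher-power transmission earlier, i.e. raise $p_{s,i}$ slightly and lower $p_{s,i+1}$ correspondingly while keeping total energy in these two sub-intervals fixed.

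The key steps, in order, are: (i) Since the consumed energy is strictly below the harvested energy at the change point, there is slack in the source energy-causality constraint, so a small perturbation that increases $p_{s,i}$ (consuming more energy earlier) remains feasible with respect to source energy causality. (ii) Keeping $l_{s,i}+l_{s,i+1}$ fixed and keeping the total energy $p_{s,i}l_{s,i}+p_{s,i+1}l_{s,i+1}$ fixed, move the two power levels toward each other; by strict concavity of $g$, the data transmitted by the source over these two sub-intervals does not decrease (it strictly increases unless they were already equal). (iii) Because $B_s$ only increases and $B_r$ is unchanged, data causality at the relay is preserved. (iv) Continue the perturbation until either the two power levels become equal — in which case the "change" has been eliminated and we may re-examine the resulting (possibly longer) constant-power stretch — or until the consumed energy catches up with the harvested energy at some earlier point, at which case an energy arrival now justifies the change and the desired property holds at that point. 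Iterating over all power-change points within all epochs of the source and the relay yields a policy with the stated property and at least as large $B_r(T)$.

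The main obstacle I anticipate is handling the interaction with the relay's data causality and half-duplex constraints when the perturbation shifts energy consumption across an epoch boundary, particularly when the source epoch in question is immediately followed (per Corollary \ref{Corollary 1}) by a relay epoch whose feasible transmission depends on $B_s$ at that instant. Here the fact that our perturbation only \emph{increases} $B_s(t)$ pointwise on the affected interval (and leaves it unchanged elsewhere, since we preserve total energy and total duration within the pair of sub-intervals) is exactly what saves us: it can only relax the relay's data-causality constraint, never tighten it, and it does not alter the schedule. A secondary subtlety is making the argument uniform across both nodes: for the relay the analogous perturbation raises $p_{r,i}$ and lowers $p_{r,i+1}$, which increases $B_r$ on that interval and could threaten data causality in \emph{later} relay epochs; but as in the proof of Lemma \ref{lemma 4}, this can be absorbed by delaying subsequent relay epochs (Remark \ref{remark 2}) or by rebalancing durations with the preceding source epoch, so feasibility and optimality are maintained. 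Finally, one must note that after equalizing a pair of levels, the merged stretch may still need internal re-ordering so that powers remain non-decreasing (as in the last sentence of the proof of Lemma \ref{lemma 4}), and Lemma \ref{lemma 3} may be invoked to delete any idle time created along the way.
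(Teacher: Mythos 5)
Your overall strategy---equalize adjacent power levels using strict concavity of $g$, subject to energy causality, and absorb the data-causality side effects by delaying relay epochs as in Lemma~\ref{lemma 4}---is in the right spirit, but note first that it is not the paper's route: the paper treats the source (relay) epoch as a single-link problem and invokes the Yang--Ulukus single-link result wholesale, replacing the within-epoch policy by one that sends the same data with the same energy in a shorter duration and that, by \cite[Lemma 3]{Yang2010}, changes power only when the consumed energy meets the harvested energy; the idle time so created is then deleted via Lemma~\ref{lemma 3}. The reason the paper leans on that global single-link structure rather than on a pairwise exchange is exactly where your argument has a gap.

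The gap is in steps (i) and (iv). Slack in the energy-causality constraint \emph{at the power-change point} does not imply that uniformly raising $p_{s,i}$ over its entire span is feasible: a single constant level may span interior energy-arrival instants, and the constraint $E_s(t)\le\sum_{k:\,t_{s,k}<t}E_{s,k}$ can be tight just before such an interior arrival even though it is slack at the change point (the same can occur inside the span of $p_{s,i+1}$, since your perturbation increases $E_s(t)$ pointwise there as well). Concretely, take one source epoch on $[0,3]$ with arrivals $2,10,1$ at $t=0,1,2$ and levels $p_{s,1}=1$ on $[0,2]$, $p_{s,2}=11$ on $[2,3]$: the change at $t=2$ has consumed energy $2<12$, yet your perturbation stalls at $p_{s,1}=2$, $p_{s,2}=9$ because the battery empties at $t=1^{-}$. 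Neither of your termination conditions is met---the levels are unequal and the change point $t=2$ still violates the lemma---and iterating ``over all power-change points'' cannot recover, since $t=1$ is not a power-change point in the stalled configuration (the property-satisfying policy here is $p=2$ on $[0,1]$ and $p=5.5$ on $[1,3]$, with its only change at $t=1$). Repairing this requires redistributing energy against the entire harvested-energy staircase within the epoch (the tightest-string construction of \cite{Yang2010}), or at least recursing on the sub-epoch beginning at the first instant where the constraint binds; as written, the local pairwise move does not establish the stated property.
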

\begin{proof}
Suppose for a feasible transmission policy, the statement of the lemma does not hold for a particular source epoch. Then by \cite{Yang2010}, the policy can be replaced by another one which transmits the same amount of data consuming the same energy but in a shorter epoch duration. Note that by \cite[Lemma 3]{Yang2010} this new policy satisfies the property specified in the Lemma. Then by Lemma \ref{lemma 3}, this policy can be further replaced by another one which transmits at least the same amount of data and leaves no idle time. The same argument is valid for the relay as well.
\end{proof}

\begin{lemma}
\label{lemma 6}
Any feasible transmission policy can be replaced with one that satisfies the following property without decreasing the data rate: There is a relay epoch between source power levels $p_{s,i}<p_{s,i+1}$ provided there is nonzero energy in relay battery just before the time that source power level changes.
\end{lemma}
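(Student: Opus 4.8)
The plan is to give an explicit ``cut-and-paste'' construction that moves a tiny slice of a later relay epoch back to the instant of the source power increase. By Lemma~\ref{lemma 3} and Remark~\ref{remark 3} I may assume the given feasible policy leaves no idle time and has the source transmit whenever it has energy. Let $t^{*}$ be a time at which the source power rises from $p_{s,i}$ to $p_{s,i+1}$ with relay battery level $E_{r}>0$ at $t^{*}$, and let $[a,b]$ be the source epoch containing $t^{*}$. If $t^{*}=a$ the increase happens at the start of a source epoch, which (by maximality of the source schedule and the no-idle-time property) is immediately preceded by a relay epoch, so the asserted property already holds; hence assume $t^{*}>a$. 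Since the source transmits at positive power $p_{s,i}$ throughout $[a,t^{*}]$ while the relay is silent, data causality at $a$ gives
\[
B_{s}(t^{*})-B_{r}(t^{*})\;\ge\;g(p_{s,i})\,(t^{*}-a)\;>\;0,
\]
so the relay buffer carries strictly positive backlog at $t^{*}$. Because the relay is silent on $[t^{*},b]$ its battery at $b$ is still at least $E_{r}>0$, and by the no-idle-time property (together with Lemma~\ref{lemma 2}, which forces the relay battery to be emptied by $T$) the source epoch $[a,b]$ is followed by a relay epoch $[b,c]$ with $c>b$.

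Next I would fix a sufficiently small $d>0$ and define the new policy as follows: unchanged on $[0,t^{*}]$; on $[t^{*},t^{*}+d]$ the relay runs exactly the power profile it used on $[b,b+d]$ in the original policy; the source's original transmission on $[t^{*},b]$ (at power $p_{s,i+1}$) is shifted to $[t^{*}+d,b+d]$; on $[b+d,c]$ the relay runs its original profile from $[b+d,c]$; and the policy is unchanged on $[c,T]$. This simply splits the source epoch $[a,b]$ into $[a,t^{*}]$ and $[t^{*}+d,b+d]$ with a new relay epoch $[t^{*},t^{*}+d]$ inserted between them, and shortens the relay epoch $[b,c]$ to $[b+d,c]$. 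In particular the total transmitted amounts $B_{s}(T)$ and $B_{r}(T)$ are unchanged, so the data rate does not decrease.

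It then remains to check feasibility. Half-duplex is clear, and source energy causality holds because the source only ever transmits later than before. For relay energy causality, on $[t^{*},t^{*}+d]$ the relay consumes no more than the energy it used on $[b,b+d]$ originally; choosing $d$ small enough that this quantity is $\le E_{r}$ makes it admissible, since $E_{r}$ is precisely the relay battery at $t^{*}$. A short bookkeeping argument then shows that on $[b+d,c]$ the relay's cumulative consumption $E_{r}(\cdot)$ in the new policy equals its original value there, and on $[c,T]$ nothing has changed, so causality is inherited. For data causality, on $[t^{*},t^{*}+d]$ the relay delivers the data it originally delivered on $[b,b+d]$, which by the displayed inequality is $\le B_{s}(t^{*})-B_{r}(t^{*})$ once $d$ is small; on $[t^{*}+d,b+d]$ the source only increases $B_{s}$ while $B_{r}$ is constant; and on $[b+d,c]$ and $[c,T]$ one verifies that $B_{s}$ and $B_{r}$ coincide with their original values, so $B_{r}\le B_{s}$ is inherited. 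Applying this modification to each source-power increase that violates the stated property (finitely many, as there are finitely many energy arrivals) produces a policy with the desired property.

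The main obstacle is getting the inserted relay epoch to be ``affordable'' in both currencies simultaneously: the relay must hold enough stored energy at $t^{*}$ to run the borrowed slice (secured by the hypothesis $E_{r}>0$ and by taking $d$ small), and the relay buffer must hold enough backlog at $t^{*}$ to legally deliver that slice's data (secured by first reducing to $t^{*}>a$, which forces strictly positive backlog). Verifying that every other causality constraint is only relaxed --- in fact that $B_{s}$, $B_{r}$ and $E_{r}$ revert exactly to their original values from time $b+d$ onward --- is the routine but slightly fiddly part of the argument.
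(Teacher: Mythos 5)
Your construction is correct and proves the lemma as stated, but it takes a genuinely different route from the paper. The paper does not leave the source powers untouched: using Lemma \ref{lemma 5} it first aligns the power change with an energy arrival, then replaces $p_{s,i},p_{s,i+1}$ by perturbed levels $p_{s,i}<p_{s,i}'<p_{s,i+1}'<p_{s,i+1}$ of durations $l_{s,i}-\epsilon$, $l_{s,i+1}+\epsilon$ (keeping the energy spent at each level fixed), and slots an $\epsilon$-slice of the following relay epoch into the gap; strict concavity of $g$ then gives a \emph{strict} increase in the source's transmitted data. Your version is a pure cut-and-paste of the schedule with all power levels preserved, so the rate is exactly unchanged. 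Your approach is more elementary and your feasibility bookkeeping (bounding the borrowed slice's energy by the battery level $E_r$ at $t^*$ and its data by the backlog $B_s(t^*)-B_r(t^*)\geq g(p_{s,i})(t^*-a)>0$) is carefully and correctly done --- indeed more explicitly than in the paper. What the paper's strict-improvement version buys is downstream leverage: in Section IV it is used to assert that the optimal policy in region $\mathcal{S}_2$ \emph{must} have two source epochs (and feeds the uniqueness claim), whereas your equality version only shows that \emph{some} optimal policy has the property, not that every optimal policy does. Two smaller remarks: your appeal to Lemma \ref{lemma 2} to guarantee that the source epoch $[a,b]$ is followed by a relay epoch is not quite right for an arbitrary feasible policy (Lemma \ref{lemma 2} asserts optimality of depleting batteries, not that every feasible policy does so); the degenerate case $b=T$ with no subsequent relay epoch needs a separate word, though the paper's own proof silently assumes a following relay epoch as well. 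Also, when you iterate over the finitely many violating change points you should note (as you implicitly do) that each insertion leaves the source power-change points unchanged, so no new violations are created.
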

\begin{proof}
Without loss of generality, consider an epoch of the source in which the statement of the Lemma does not hold. Also assume this epoch has two power levels $p_{s,i}<p_{s,i+1}$ such that total consumed source energy up to the time that the power level changes equal to total harvested source energy (Lemma \ref{lemma 5}). By keeping consumed energy for the duration of each power level ($l_{s,i}$ and $l_{s,i+1}$) the same, the policy can be replaced by another one which has power levels $p_{s,i}'$, $p_{s,i+1}'$ of durations $l_{s,i}'=l_{s,i}-\epsilon$, $l_{s,i+1}'=l_{s,i+1}+\epsilon$ with $p_{s,i}<p_{s,i}'<p_{s,i+1}'<p_{s,i+1}$ and relay transmitting for $\epsilon$ duration in between these power levels provided it has energy. Relay transmission is obtained by moving $\epsilon$ portion of the following relay epoch earlier. Since relay has nonzero data in its data buffer, the new policy remains feasible. In the new policy, difference between the first and second power levels $p_{s,i}'$ and $p_{s,i+1}'$ decreases without changing total transmission duration of the nodes, hence, the source transmits more data due to strict concavity of $g(p)$.
\end{proof}

\begin{lemma}
\label{lemma 9}
In an optimal transmission policy, whenever the relay power levels change in two successive epochs, either the data or the energy buffer of the relay is empty at the end of the first epoch.
\end{lemma}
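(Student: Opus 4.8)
The plan is to argue by contradiction against optimality. Suppose that in an optimal policy the relay power changes between two successive relay epochs, epoch $i$ and epoch $i+1$; by Lemma~\ref{lemma 4} the change is necessarily an increase $p_{r,i}<p_{r,i+1}$, and by Corollary~\ref{Corollary 1} the two relay epochs are separated by a source epoch. Assume, contrary to the claim, that at the end of epoch $i$ the relay still has $\delta_E>0$ of unconsumed harvested energy \emph{and} $\delta_B>0$ of undelivered data (i.e. $B_s-B_r=\delta_B$ at that instant). I will exhibit a feasible policy with strictly larger $B_r(T)$.

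The perturbation is a local power exchange internal to the relay: on an interval of length $\epsilon$ at the end of epoch $i$ raise the relay power from $p_{r,i}$ to $p_{r,i}+a$, and on an interval of length $\epsilon$ at the start of epoch $i+1$ lower it from $p_{r,i+1}$ to $p_{r,i+1}-a$, leaving every other source and relay power level untouched. The relay's total consumed energy is unchanged (so Lemma~\ref{lemma 2} still holds), and since this moves the two levels toward each other with $p_{r,i}<p_{r,i+1}$, strict concavity of $g$ makes the relay's total transmitted data over the two epochs strictly larger, by some $\eta=\eta(a,\epsilon)>0$. Taking $\epsilon$ small enough to avoid any relay energy arrival inside the two intervals and then $a$ small, one checks directly that feasibility is preserved near the perturbed intervals: relay energy causality holds because the consumed‑energy curve is only raised, by at most $a\epsilon$, over the tail of epoch $i$ and the start of epoch $i+1$, where the old slack is at least $\delta_E$ minus an $O(\epsilon)$ term; and $B_r(t)\le B_s(t)$ holds there because $B_s$ is untouched while the increase of $B_r$ never exceeds $\delta_B$ minus an $O(\epsilon)$ term (the intervening source epoch only enlarges the data‑buffer slack, and inside epoch $i+1$ the $B_r$‑offset decreases again).

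The only remaining feasibility issue is data causality over the rest of the horizon, where $B_r^{\mathrm{new}}=B_r^{\mathrm{old}}+\eta$ uniformly: a violation occurs iff the old relay data buffer $B_s(t)-B_r(t)$ vanished at some later instant, which by Corollary~\ref{Corollary 1} can only happen at the start of a subsequent source epoch. If that buffer stays strictly positive on $[\,$end of epoch $i+1\,,\,T]$, then choosing $a$ so small that $\eta$ is below its minimum finishes the argument. Otherwise I invoke the rebalancing already used in the proof of Lemma~\ref{lemma 4}: delay the later relay epochs (Remark~\ref{remark 2}) and/or lengthen the source epoch preceding relay epoch $i$ with its consumed energy held fixed while shortening relay epoch $i$ with its power held fixed, and then remove the resulting idle time (Lemma~\ref{lemma 3}); lengthening a source epoch at fixed energy can only increase the source's delivered data (by concavity it costs nothing), so this reabsorbs the extra $\eta$ of forwarded data while retaining a strict net increase of $B_r(T)$. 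Either way optimality is contradicted, so at the end of epoch $i$ the relay's data buffer or its energy buffer must be empty.

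The crux — and the step I expect to be the main obstacle — is precisely this last one: the local concavity exchange unavoidably raises the relay's \emph{cumulative} delivered data, so one must argue that downstream data causality can always be restored. The key simplification is Corollary~\ref{Corollary 1}: the only possible downstream obstruction is a later epoch at which the relay data buffer is exactly empty, and that situation is handled by the same source‑epoch‑stretching and relay‑epoch‑delaying moves established earlier, so no genuinely new technique is needed.
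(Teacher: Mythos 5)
Your proposal is correct and follows essentially the same route as the paper's proof: argue by contradiction, use Lemma~\ref{lemma 4} to get $p_{r,i}<p_{r,i+1}$, shift relay energy from the later epoch to the earlier one so that strict concavity of $g$ yields strictly more relay data (possible precisely because both the energy and data buffers have positive slack), and repair any downstream data-causality violation by lengthening the preceding source epoch at fixed energy while shortening relay epoch $i$, as in Lemma~\ref{lemma 4}. Your version is only cosmetically different (a local $\epsilon$-interval exchange instead of adjusting the full power levels, plus a more explicit check of local feasibility and of the dichotomy on the downstream buffer minimum), and is in fact somewhat more careful than the paper's own argument.
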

\begin{proof}
The proof is given by contradiction. Suppose the lemma is not satisfied. Consider power levels $p_{r,i}<p_{r,i+1}$ in two successive epochs of relay (see Lemma \ref{lemma 4}). Then, by keeping $l_{r,i}$ and $l_{r,i+1}$ the same we can increase $p_{r,i}$ and decrease $p_{r,i+1}$ unless battery or data buffer of the relay is empty at the end of the first epoch. As a result, the relay transmits strictly more data due to strict concavity of the $g(p)$. This operation can be done until battery or data buffer depletes, or power levels equalize. However, due to increase in the total delivered data by the relay, the new transmission policy may violate data causality in the subsequent epochs. In that case, as in the proof of Lemma \ref{lemma 4} this policy can be further replaced by another one which is obtained by decreasing duration of the first relay epoch, and increasing duration of the preceding source epoch until data causality is satisfied. Clearly, in the new transmission policy, the total transmitted data by the relay is increased compared to the initial policy. This results in a contradiction, hence the lemma must be true.
\end{proof}

\begin{lemma}
\label{lemma 8}
In an optimal transmission policy, source and relay transmit same amount of data until the deadline.
\end{lemma}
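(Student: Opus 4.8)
The plan is a proof by contradiction. Data causality already gives $B_r(T)\le B_s(T)$, so it suffices to rule out $B_s(T)>B_r(T)$ in an optimal policy; suppose it holds and set $\delta:=B_s(T)-B_r(T)>0$. By Corollary \ref{Corollary 1} (part 1) together with Remark \ref{remark 3} (the source transmits whenever it has energy), the schedule is strictly alternating and ends with a relay epoch; write its last two epochs as the source epoch $[a,b]$ and the final relay epoch $[b,T]$, with $a<b$. On $[b,T]$ the source is silent, so $B_s(\cdot)\equiv B_s(T)$ there, while $B_r(\cdot)\le B_r(T)<B_s(T)$; hence data causality is strictly slack throughout the last relay epoch. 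Intuitively the source is spending channel time on bits the relay will never forward, while the relay --- which by Lemma \ref{lemma 2} ends with an empty battery --- could convert extra time into more delivered data; the rest of the proof turns this into a valid perturbation.

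The first thing to pin down is that the relay can actually use energy just before $b$. Since the source's last epoch ends at $b<T$ and the source never transmits afterwards, Lemma \ref{lemma 2} forces the source battery to be empty at $b$, so by Corollary \ref{Corollary 1} (part 2) the relay battery is nonempty at $b$; as there are only finitely many arrivals, we may fix $\varepsilon>0$ small enough that $b-\varepsilon>a$, no energy arrives at either node in $(b-\varepsilon,b)$, and the relay has strictly positive stored energy throughout $[b-\varepsilon,b]$. Now define a new policy agreeing with the old one on $[0,b-\varepsilon]$, shortening the last source epoch to $[a,b-\varepsilon]$ at its last power level $p_{s,m}$ (so the source sends $g(p_{s,m})\varepsilon$ fewer bits and leaves $p_{s,m}\varepsilon$ of energy unused, which does not hurt feasibility), and letting the relay occupy the whole interval $[b-\varepsilon,T]$, transmitting optimally with its unchanged remaining energy $E_n$ and unchanged arrival instants.

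It remains to check feasibility and a strict gain. On $[b-\varepsilon,T]$ the source is silent, so $B_s$ is the constant $\beta:=B_s(T)-g(p_{s,m})\varepsilon$, while $B_r$ rises monotonically from $B_r(a)$ (unchanged) to $B_r^{new}(T)$. Since $B_r(a)\le B_r(T)<B_s(T)$, we have $B_r(a)\le\beta$ for small $\varepsilon$, so data causality on $[b-\varepsilon,T]$ reduces to $B_r^{new}(T)\le\beta$. Let $D_{old}$ and $D_{new}$ be the data the relay delivers over $[b,T]$ in the old policy and over $[b-\varepsilon,T]$ in the new one. Strict concavity of $g$ --- a fixed energy budget spread over a strictly longer interval yields strictly more delivered data, cf.\ Lemmas \ref{lemma 1} and \ref{lemma 3} --- gives $D_{new}>D_{old}$, with $D_{new}\to D_{old}$ as $\varepsilon\to0$; hence for $\varepsilon$ small enough $(D_{new}-D_{old})+g(p_{s,m})\varepsilon<\delta$, which yields both $B_r^{new}(T)=B_r(a)+D_{new}<B_s(T)-g(p_{s,m})\varepsilon=\beta$ (feasible) and $B_r^{new}(T)>B_r(a)+D_{old}=B_r(T)$ (strictly more data). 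This contradicts optimality, so $B_s(T)=B_r(T)$.

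The only delicate step --- the one I would write out carefully --- is this last bookkeeping: one must keep data causality on $[b-\varepsilon,T]$ (which survives precisely because of the strict slack $\delta$) while guaranteeing the relay can draw on its energy during the added interval $[b-\varepsilon,b]$; the latter is exactly what Corollary \ref{Corollary 1} (parts 1--2) and Lemma \ref{lemma 2} buy us, ruling out the otherwise troublesome situation in which the final relay epoch is fueled by an energy packet arriving exactly at $b$.
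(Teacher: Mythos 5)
Your strategy differs from the paper's: the paper closes the gap $B_s(T)-B_r(T)$ by extending relay epochs \emph{forward} into the source epochs that follow them, and reserves a backward extension for the \emph{first} relay epoch only, where $E_{r,0}>0$ guarantees the relay has energy available at the earlier times. You instead perform a single backward extension of the \emph{last} relay epoch, and that is exactly where your argument has a genuine gap. The step ``the source battery is empty at $b$, so by Corollary~\ref{Corollary 1} (part 2) the relay battery is nonempty at $b$, hence strictly positive throughout $[b-\varepsilon,b]$'' does not go through. First, the source battery is strictly positive on all of $[b-\varepsilon,b)$ --- the source is still draining it at the positive power $p_{s,m}$ and only hits zero at $b$ --- so the corollary tells you nothing about the relay's battery on that interval. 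Second, at the single instant $b$ the corollary is consistent with the relay having no \emph{usable} energy: it rests on the no-idle-time property (Lemma~\ref{lemma 3}), which is not violated when the relay's battery is empty on $[a,b)$ and its final epoch is powered entirely by a packet arriving at exactly $t_{r,j}=b$ (under the paper's causality constraint $E_r(t)\leq\sum_{i:t_{r,i}<t}E_{r,i}$, such a packet cannot be spent before $b$). You flag this ``troublesome situation'' yourself, but the results you cite do not rule it out.

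The gap is not cosmetic. If an earlier relay epoch ends at $a$ with the relay's battery exhausted and the relay's remaining energy arrives exactly at $b$, no backward extension of $[b,T]$ is possible, and the final relay epoch may already extract the maximum data from that energy; one can then construct instances (a heavily over-provisioned source, a tiny $E_{r,0}$, and $E_{r,1}$ arriving at $t_{r,1}=b$) in which $B_r(T)$ cannot be raised at all even though $B_s(T)>B_r(T)$. In such cases the lemma must be salvaged in the opposite direction --- reduce $B_s(T)$ to $B_r(T)$ by having the source withhold, or dump at negligible rate, the un-forwardable energy --- rather than by the strict-improvement contradiction you set up. To repair your proof you would need either to treat the case where the relay's battery vanishes on $[b-\varepsilon,b)$ separately (perturbing an earlier relay/source boundary, or lowering $B_s(T)$ instead of raising $B_r(T)$), or to adopt the paper's decomposition, whose backward extension is applied only where $E_{r,0}>0$ certifies energy availability.
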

\begin{proof}
Consider a feasible transmission policy such that the source transmits more data than the relay until the deadline, i.e., $B_r(T)<B_s(T)$, hence total delivered data to the destination is $B_r(T)$. While keeping the consumed energy in each epoch the same, increasing duration of an epoch strictly increases transmitted data in that epoch \cite{antepli}. Therefore, the initial policy can be replaced by another one of higher rate obtained by increasing duration of relay epochs while decreasing duration of the following source epochs under data causality constraint. However, if the first epoch of the source transmits more data than total transmitted data in the following relay epochs, we need to decrease the duration of the first source epoch while increasing duration of the first relay epoch. Note that increasing duration of the first relay epoch is always feasible due to $E_{r,0}>0$. Combining these, we can find another feasible policy transmitting higher data such that source and relay transmit same amount of data until the deadline, i.e., $B_r(T)=B_s(T)$.
\end{proof}

\vspace{-0.1in}
\section{Optimal Policy for Two Energy Arrival at the Source and the relay}
\label{optimization of specific cases}
In this section, using the properties developed in Section \ref{Properties of Optimal Scheduling and Power Allocation}, we describe the optimal transmission policy for the two energy arrival case, $M=N=2$. We use the {\em cumulative curve} model as in \cite{zafer}. We will refer to cumulative energy arrival curve as the {\em harvested energy curve} \cite{deniz2}. Similarly, the {\em transmitted energy curve} refers to $E_s(t)$ or $E_r(t)$. By Remark \ref{remark 1} we express $p_s(t)$ by sequence of constant source power levels including zero power using the power vector $\mathbf{p}_s=[p_{s,1},0,p_{s,2},...]$ with the corresponding duration vector $\mathbf{l}_s=[l_{s,1},\xi_{r,1},l_{s,2},...]$. Similarly, $\mathbf{p}_r$ and $\mathbf{l}_r$ can be defined. Note that ($\mathbf{p}_s,\mathbf{l}_s,\mathbf{p}_r,\mathbf{l}_r$) constitutes a transmission policy.

The solution of the optimization problem (\ref{prob 1}) depends on the location of the {\em corner points} $(t_{s,1},E_{s,0})$ and $(t_{r,1},E_{r,0})$ of the harvested energy curves of the source and the relay, respectively, in the energy-time region. Here, $t_{s,1}$ and $t_{r,1}$ are the source and relay energy arrival instants, $E_{s,0}$ and $E_{r,0}$ are initial battery energies and the {\em energy-time region} of the source and the relay refer to the rectangles defined by the points $(T,E_{s,0}+E_{s,1})$ and $(T,E_{r,0}+E_{r,1})$, respectively. We first identify two sub-regions $\mathcal{S}_1$ and $\mathcal{S}_2$ of the energy-time region of the source as shown in Figure \ref{fig 3:subfig1}. Boundaries of these regions are specified by dashed lines, and will be described subsequently. Then, based on the location of $(t_{s,1},E_{s,0})$ with respect to these sub-regions we identify regions on the energy-time region of the relay. If $(t_{s,1},E_{s,0})\in \mathcal{S}_1$, there is a single relay region which is the complete relay energy-time region. If $(t_{s,1},E_{s,0})\in\mathcal{S}_2$ we specify four sub-regions {$\mathcal{R}_1, \mathcal{R}_2, \mathcal{R}_3, \mathcal{R}_4$} for the relay corner point as illustrated in Figure \ref{fig 3:subfig2}. These sub-regions depend on the harvested energy curve of the source as well as $(T,E_{r,0}+E_{r,1})$. Boundaries of each region along with the respective optimal transmission policy will be described when the corresponding region is discussed.

In the solutions described, it can be argued that obtained optimal power vectors $\mathbf{p}_s$ and $\mathbf{p}_r$ are unique. The proof of uniqueness will be omitted due to space constraints.
\begin{figure}[t]
\centering
\subfigure[]{
\includegraphics[scale=1,trim= 40 5 5 0]{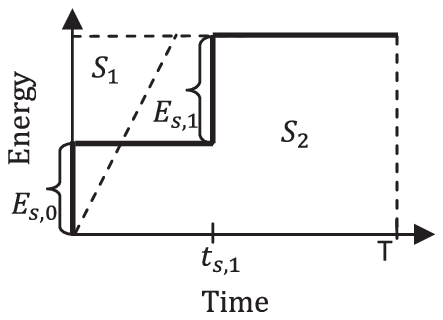}
\label{fig 3:subfig1}
\vspace{-0.2in}
}
\subfigure[]{
\includegraphics[scale=1,trim= 10 5 30 0]{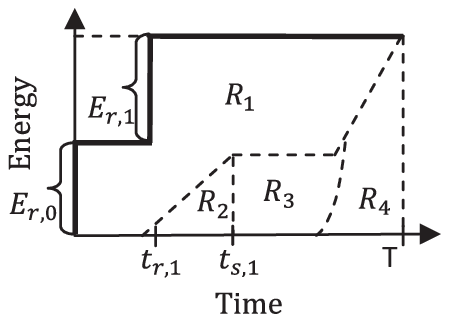}
\label{fig 3:subfig2}
}
\vspace{-0.2cm}
\caption{(a) Source energy-time region and sub-regions for the corner point $(t_{s,1},E_{s,0})$. (b) Relay energy-time region and sub-regions for the corner point $(t_{r,1},E_{r,0})$. Cumulative harvested energy is represented by solid lines and boundaries of the regions are represented by dashed lines.}
\label{fig 3}%
\vspace{-0.2in}
\end{figure}
\vspace{-0.15in}
\subsection{Source Region $\mathcal{S}_1$}
Source region $\mathcal{S}_1$ is defined such that the source will be able to transmit with constant power in a single epoch without violating energy constraints (Lemma \ref{lemma 1}, Remark \ref{remark 3}), and the relay will transmit in the remaining time (Lemma \ref{lemma 3}). Thus, the problem reduces to single energy arrival for the source and two energy arrival for the relay, i.e., $M=1$, $N=2$. This problem is solved in \cite{deniz} which gives a unique optimal source epoch length $\xi_{s,1}^*=l_{s,1}$. Clearly, the source transmitted energy curve given by the line starting from origin with slope $p_{s,1}=\frac{E_{s,0}+E_{s,1}}{\xi_{s,1}^*}$ provides the boundary of $\mathcal{S}_1$ because the source can deplete all harvested energy in a single epoch if and only if the corner point $(t_{s,1},E_{s,0})$ is above this line (Lemma \ref{lemma 5}). Therefore the condition $\frac{E_{s,0}+E_{s,1}}{\xi_{s,1}^*}\leq\frac{E_{s,0}}{t_{s,1}}$ defines region $\mathcal{S}_1$, and the algorithm in \cite{deniz} provides an optimal transmission policy.
\vspace{-0.15in}
\subsection{Source Region $\mathcal{S}_2$}
Whenever $\frac{E_{s,0}}{t_{s,1}}<\frac{E_{s,0}+E_{s,1}}{\xi_{s,1}^*}$, the corner point $(t_{s,1},E_{s,0})$ will be in region $\mathcal{S}_2$.
In this region, the source has two epochs (Lemma \ref{lemma 6}), hence, the relay has two epochs (Lemma \ref{lemma 3}, Remark \ref{remark 3}). Also, the source has constant power levels $p_{s,1}$ and $p_{s,2}$ with lengths $\xi_{s,1}$ and $\xi_{s,2}$, respectively, in these epochs. The optimal transmission policy further depends on the region relay's corner point $(t_{r,1},E_{r,0})$ lies in and will be specified below.

\subsubsection{Relay Region $\mathcal{R}_1$}
In this sub-region, relay's harvested energy does not restrict its transmitted energy curve. Therefore we argue that total transmitted data is equal to the case that the relay harvests all its total energy $E_{r,0}+E_{r,1}$ at $t=0$. Using Lemma \ref{lemma 1} and Lemma \ref{lemma 5}, both the source and the relay have a single power level in each epoch. We first find an optimal transmission policy by assuming that the source harvests all its energy $E_{s,0}+E_{s,1}$ at $t=0$. Since this corresponds to largest transmitted data, if we can find a feasible policy transmitting this amount of data, it needs to be optimal. Solving
\begin{eqnarray}
\label{single arrival}
& g\left(\frac{E_{s,0}+E_{s,1}}{t}\right)t=g\left(\frac{E_{r,0}+E_{r,1}}{T-t}\right)(T-t)
\vspace{-0.05in}
\end{eqnarray}
for $t$ gives the optimal division of total time $T$ between the source and the relay when all energies are harvested at $t=0$. Equality in (\ref{single arrival}) is due to Lemma \ref{lemma 8} ($B_s(T)=B_r(T)=B(T)$). To obtain a transmission policy we set $\xi_{s,1}+\xi_{s,2}=t$ where t is the solution to (2) and use power levels as in the solution of (\ref{single arrival}). The transmitted data in the first epoch of the source and the relay are then $\frac{\xi_{s,1}}{t}B(T)$ and $\frac{t_{s,1}-\xi_{s,1}}{T-t}B(T)$, respectively (Remark \ref{remark 3}). If $\frac{\xi_{s,1}}{t}B(T)\geq \frac{t_{s,1}-\xi_{s,1}}{T-t}B(T)$, the transmission policy is feasible. Then, the optimal transmission policy is $\mathbf{p}_s=[\frac{E_{s,0}+E_{s,1}}{t},0,\frac{E_{s,0}+E_{s,1}}{t},0]$, $\mathbf{l}_s=[\frac{E_{s,0}}{E_{s,0}+ E_{s,1}}t,t_{s,1}-\frac{E_{s,0}}{E_{s,0}+ E_{s,1}}t,\frac{E_{s,1}}{E_{s,0}+ E_{s,1}}t,\tau_{s,2}-\frac{E_{s,1}}{E_{s,0}+ E_{s,1}}t]$, $\mathbf{p}_r=[0,\frac{E_{r,0}+E_{r,1}}{T-t},0,\frac{E_{r,0}+E_{r,1}}{T-t}]$, $\mathbf{l}_r=[\frac{E_{s,0}}{E_{s,0}+ E_{s,1}}t,t_{s,1}-\frac{E_{s,0}}{E_{s,0}+ E_{s,1}}t,\frac{E_{s,1}}{E_{s,0}+ E_{s,1}}t,\tau_{s,2}-\frac{E_{s,1}}{E_{s,0}+ E_{s,1}}t]$, where $t$ is the solution of (\ref{single arrival}).

If above transmission policy violates data causality constraint, i.e., $\frac{\xi_{s,1}}{t}B(T)< \frac{t_{s,1}-\xi_{s,1}}{T-t}B(T)$, invoking Lemma \ref{lemma 8}, the optimization problem in ({\ref{prob 1}}) can instead be written as
\vspace{-0.1in}
\begin{eqnarray}
\label{ob 1}
\lefteqn{\underset{\xi_{s,1},\xi_{s,2}}{\text{max}}\sum_{i=1}^2{g\left(\frac{E_{s,i-1}}{\xi_{s,i}}\right)\xi_{s,i}}}\\
\label{op 7:1}
&\text{s.t.}\; \; g\left(\frac{E_{s,i-1}}{\xi_{s,i}}\right)\xi_{s,i} =g\left(\frac{E_{r,i-1}'}{\tau_{s,i}-\xi_{s,i}}\right)(\tau_{s,i}-\xi_{s,i})\ i=1,2
\vspace{-0.05in}
\end{eqnarray}
where $\xi_{s,1}<\tau_{s,1}=t_{s,1}$, $\xi_{s,2}<\tau_{s,2}$, $0<E_{r,0}'<E_{r,0}+E_{r,1}$ and $E_{r,0}'+E_{r,1}'=E_{r,0}+E_{r,1}$. The equality in (\ref{op 7:1}) is due to empty data buffer at the end of the first relay epoch. Then, the optimal transmission policy is $\mathbf{p}_s=[\frac{E_{s,0}}{\xi_{s,1}},0,\frac{E_{s,1}}{\xi_{s,2}},0]$, $\mathbf{l}_s=[\xi_{s,1},t_{s,1}-\xi_{s,1},\xi_{s,2},\tau_{s,2}-\xi_{s,2}]$, $\mathbf{p}_r=[0,\frac{E_{r,0}'}{t_{s,1}-\xi_{s,1}},0,\frac{E_{r,1}'}{\tau_{s,2}-\xi_{s,2}}]$, $\mathbf{l}_r=[\xi_{s,1},t_{s,1}-\xi_{s,1},\xi_{s,2},\tau_{s,2}-\xi_{s,2}]$, where $\xi_{s,1}$, $\xi_{s,2}$ are the solutions to (\ref{ob 1}) subject to (\ref{op 7:1}).

Clearly, the relay transmitted energy curve given by the optimal transmission policy ($\mathbf{p}_r,\mathbf{l}_r$) above provides the boundary between $\mathcal{R}_1$ and the other relay regions because if $(t_{r,1},E_{r,0})$ is above this curve, the policy ($\mathbf{p}_r,\mathbf{l}_r$) is feasible.

\subsubsection{Relay Region $\mathcal{R}_2$}
In this sub-region, energy causality is violated in the first epoch of the relay if the transmission policy in $\mathcal{R}_1$ is used. Therefore, using Remark \ref{remark 3}, Lemma \ref{lemma 3} and Lemma \ref{lemma 5}, the relay power level changes at $t_{r,1}$ resulting in two power levels $p_{r,1}$ and $p_{r,2}$ in the first epoch of the relay, and a third power level $p_{r,3}$ in the second relay epoch. Note that $p_{r,2}$ and $p_{r,3}$ are different only if the relay data buffer is empty at the end of the first epoch (Lemma \ref{lemma 9}). In the optimal policy whether the data buffer is empty or not is not known in advance; therefore we provide two different solutions based on the state of the relay data buffer. When $p_{r,2}=p_{r,3}$, we maximize (\ref{ob 1}) subject to
\vspace{-0.05in}
\begin{eqnarray}
\label{op 2}
&\sum_{i=1}^2{g\left(\frac{E_{s,i-1}}{\xi_{s,i}}\right)\xi_{s,i}}=\sum_{i=1}^2{g\left(\frac{E_{r,i-1}}{\tau_{r,i}-\xi_{s,i}}\right)(\tau_{r,i}-\xi_{s,i})}
\vspace{-0.05in}
\end{eqnarray}
where $\xi_{s,1}<\tau_{r,1}=t_{r,1}$, $\xi_{s,2}<\tau_{s,2}$. Note that, $l_{r,2}+l_{r,3}$ are considered together as $\tau_{r,2}-\xi_{s,2}$ because $p_{r,2}=p_{r,3}$, and $l_{r,1}=t_{r,1}-\xi_{s,1}$. The optimal policy is $\mathbf{p}_s=[\frac{E_{s,0}}{\xi_{s,1}},0,\frac{E_{s,1}}{\xi_{s,2}},0]$, $\mathbf{l}_s=[\xi_{s,1},t_{s,1}-\xi_{s,1},\xi_{s,2},\tau_{s,2}-\xi_{s,2}]$, $\mathbf{p}_r=[0,\frac{E_{r,0}}{t_{r,1}-\xi_{s,1}},\frac{E_{r,1}}{\tau_{r,2}-\xi_{s,2}},0,\frac{E_{r,1}}{\tau_{r,2}-\xi_{s,2}}]$,  $\mathbf{l}_r=[\xi_{s,1},t_{r,1}-\xi_{s,1},t_{s,1}-t_{r,1},\xi_{s,2},\tau_{s,2}-\xi_{s,2}]$ where $\xi_{s,1}$, $\xi_{s,2}$ are the solutions to (\ref{ob 1})-(\ref{op 2}).

If the above transmission policy violates data causality, i.e., $g\left(p_{s,1}\right)\xi_{s,1}<g\left(p_{r,1}\right)(t_{r,1}-\xi_{s,1})+g\left(p_{r,2}\right) (t_{s,1}-t_{r,1})$ in the above policy, we maximize (\ref{ob 1}) subject to
\vspace{-0.05in}
\begin{eqnarray}
\label{op 3:1}
\lefteqn{g\left(\frac{E_{s,0}}{\xi_{s,1}}\right)\xi_{s,1}}\\
   & =g\left(\frac{E_{r,0}}{t_{r,1}-\xi_{s,1}}\right)(t_{r,1}-\xi_{s,1})+g\left(\frac{\hat{E}_{r,1}}{t_{s,1}-t_{r,1}}\right)(t_{s,1}-t_{r,1})\nonumber\\
\label{op 3:2}
\lefteqn{g\left(\frac{E_{s,1}}{\xi_{s,2}}\right)\xi_{s,2}=g\left(\frac{E_{r,1}-\hat{E}_{r,1}}{\tau_{s,2}-\xi_{s,2}}\right)(\tau_{s,2}-\xi_{s,2})}
\vspace{-0.05in}
\end{eqnarray}
where $\xi_{s,1}<t_{r,1}$, $\xi_{s,2}<\tau_{s,2}$ and $\hat{E}_{r,1}<E_{r,1}$ with $\hat{E}_{r,1}$ corresponding to the consumed energy by the relay in the interval $(t_{r,1}, t_{s,1})$. The first constraint in (\ref{op 3:1}) is due to empty data buffer at the end of the first relay epoch, and the second constraint in (\ref{op 3:2}) is due to Lemma \ref{lemma 8}. The resulting optimal transmission policy is $\mathbf{p}_s=[\frac{E_{s,0}}{\xi_{s,1}},0,\frac{E_{s,1}}{\xi_{s,2}},0]$, $\mathbf{l}_s=[\xi_{s,1},t_{s,1}-\xi_{s,1},\xi_{s,2},\tau_{s,2}-\xi_{s,2}]$, $\mathbf{p}_r=[0,\frac{E_{r,0}}{t_{r,1}-\xi_{s,1}},\frac{\hat{E}_{r,1}}{t_{s,1}-t_{r,1}},0,\frac{E_{r,1}-\hat{E}_{r,1}}{\tau_{s,2}-\xi_{s,2}}]$, $\mathbf{l}_r=[\xi_{s,1},t_{r,1}-\xi_{s,1},t_{s,1}-t_{r,1},\xi_{s,2},\tau_{s,2}-\xi_{s,2}]$ where $\xi_{s,1}$, $\xi_{s,2}$ and $\hat{E}_{r,1}$ are the solutions to (\ref{ob 1}) subject to (\ref{op 3:1})-(\ref{op 3:2}). The boundary between $\mathcal{R}_2$ and $\mathcal{R}_3$ is specified as the vertical line at $t_{s,1}$ (Remark \ref{remark 3}).

\subsubsection{Relay Region $\mathcal{R}_3$}
We define this sub-region as the region in which the relay optimal transmission policy only uses the first harvested energy $E_{r,0}$ in its first epoch and the second harvested energy $E_{r,1}$ in the second epoch (Remark \ref{remark 3}, Lemma \ref{lemma 9}). Therefore, the relay has power levels $p_{r,1}=\frac{E_{r,0}}{\xi_{r,1}}$ and $p_{r,2}=\frac{E_{r,1}}{\xi_{r,2}}$ with durations $\xi_{r,1}$ and $\xi_{r,2}$ in the first and the second epoch of the relay, respectively. As discussed in Lemma \ref{lemma 9}, these power levels are different since all energy $E_{r,0}$ is used in the first epoch. In the following lemma we prove that relay still has data in its buffer at the end of the first epoch.
\begin{lemma}
\label{lemma 11}
In an optimal transmission policy, relay has nonzero data at the end of the first epoch if $(t_{r,1},E_{r,0})\in\mathcal{R}_3$.
\end{lemma}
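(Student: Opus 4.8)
The plan is to argue by contradiction, in the style of Lemmas~\ref{lemma 4} and~\ref{lemma 9}. Recall the structure established for region $\mathcal{R}_3$: the source transmits at constant power $p_{s,1}=E_{s,0}/\xi_{s,1}$ on $[0,\xi_{s,1}]$ and $p_{s,2}=E_{s,1}/\xi_{s,2}$ on $[t_{s,1},t_{s,1}+\xi_{s,2}]$; the relay transmits at constant power $p_{r,1}=E_{r,0}/\xi_{r,1}$ on $[\xi_{s,1},t_{s,1}]$ and $p_{r,2}=E_{r,1}/\xi_{r,2}$ on $[t_{s,1}+\xi_{s,2},T]$, with $\xi_{r,1}=t_{s,1}-\xi_{s,1}$, $\xi_{r,2}=\tau_{s,2}-\xi_{s,2}$, and $p_{r,1}<p_{r,2}$. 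The relay data buffer at the end of its first epoch (time $t_{s,1}$) equals $D_{1}=g(p_{s,1})\xi_{s,1}-g(p_{r,1})\xi_{r,1}\ge 0$, and the claim is $D_{1}>0$. So I would assume $D_{1}=0$. Then $B_{r}(t_{s,1})=B_{s}(t_{s,1})$, and combining this with Lemma~\ref{lemma 8} ($B_{r}(T)=B_{s}(T)$) also forces $g(p_{s,2})\xi_{s,2}=g(p_{r,2})\xi_{r,2}$; hence source epoch $i$ and relay epoch $i$ carry identical data for $i=1,2$, and the problem decouples into two independent single-energy-arrival two-hop problems, on $[0,t_{s,1}]$ with energies $(E_{s,0},E_{r,0})$ and on $[t_{s,1},T]$ with energies $(E_{s,1},E_{r,1})$, each solved (as in~\cite{deniz}) at the split where the source and relay transmitted data are equal.

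The next step is a perturbation. Introduce $\phi(p):=g(p)-p\,g'(p)$, which is $\ge 0$ (since $g(0)=0$) and strictly increasing (since $\phi'(p)=-p\,g''(p)>0$ by strict concavity), and which is exactly the derivative of $t\mapsto g(E/t)\,t$ — the marginal data per unit of epoch length at fixed epoch energy, already exploited in Lemmas~\ref{lemma 4} and~\ref{lemma 8}. I would lengthen the first source epoch to $\xi_{s,1}+\epsilon$ (so the first relay epoch becomes $\xi_{r,1}-\epsilon$) and shorten the second source epoch to $\xi_{s,2}-\delta$ (so the second relay epoch becomes $\xi_{r,2}+\delta$), holding the energy of each epoch fixed. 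To first order the first source epoch gains $\epsilon\,\phi(p_{s,1})$ and the first relay epoch loses $\epsilon\,\phi(p_{r,1})$, so $D_{1}$ becomes strictly positive; and choosing $\delta=\delta(\epsilon)$ by the implicit function theorem so that $B_{r}(T)=B_{s}(T)$ is maintained gives $\delta(\epsilon)=\epsilon\,\dfrac{\phi(p_{s,1})+\phi(p_{r,1})}{\phi(p_{s,2})+\phi(p_{r,2})}+o(\epsilon)$. For $\epsilon$ small the new policy is feasible: source and relay energy causality still hold ($E_{r,1}$ is available throughout the slightly earlier second relay epoch because for an interior point of $\mathcal{R}_3$ the instant $t_{r,1}$ lies strictly inside the second source epoch), data causality at $t_{s,1}$ holds since $D_{1}>0$, and the relay buffer empties exactly at $T$ by the choice of $\delta$, so data causality holds on all of $[0,T]$.

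Finally one computes the change in delivered data, $\Delta B_{r}(T)=\epsilon\,\dfrac{\phi(p_{s,1})\phi(p_{r,2})-\phi(p_{s,2})\phi(p_{r,1})}{\phi(p_{s,2})+\phi(p_{r,2})}+o(\epsilon)$, so the perturbed policy strictly beats the assumed-optimal one — the desired contradiction — as soon as $\phi(p_{s,1})\phi(p_{r,2})>\phi(p_{s,2})\phi(p_{r,1})$, i.e. $\phi(p_{s,1})/\phi(p_{r,1})>\phi(p_{s,2})/\phi(p_{r,2})$. Establishing this inequality is the main obstacle, and it is precisely where being in $\mathcal{R}_3$ rather than $\mathcal{R}_4$ enters: the hallmark of $\mathcal{R}_3$ is that the relay exhausts \emph{all} of $E_{r,0}$ in its first epoch while $p_{r,1}<p_{r,2}$, so the relay energy-causality constraint at $t_{s,1}$ is strictly active and the relay is strictly more energy-starved in epoch~1 than in epoch~2. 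I would turn this, together with the balance relations $\frac{g(p_{s,i})}{p_{s,i}}E_{s,i-1}=\frac{g(p_{r,i})}{p_{r,i}}E_{r,i-1}$ of the two decoupled solutions, the orderings $p_{s,1}\le p_{s,2}$, $p_{r,1}<p_{r,2}$ of Lemma~\ref{lemma 4}, and the identity $\phi(p)=-p^{2}\big(g(p)/p\big)'$, into the required comparison of the $\phi$-values; equivalently, the $\mathcal{R}_3$/$\mathcal{R}_4$ boundary drawn in Figure~\ref{fig 3:subfig2} should be shown to be the locus $\phi(p_{s,1})\phi(p_{r,2})=\phi(p_{s,2})\phi(p_{r,1})$, with $\mathcal{R}_3$ the side on which the strict inequality, and hence the improving perturbation, holds. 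Once the inequality is in hand the contradiction is complete and $D_{1}>0$ follows.
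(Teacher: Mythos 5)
Your perturbation setup is sound as far as it goes: assuming the buffer empties at the end of the first relay epoch, the problem does decouple into two single-arrival two-hop problems, and the first-order effect of trading epoch length between the two halves is correctly captured by $\phi(p)=g(p)-p\,g'(p)$. But the argument is not a proof, because everything hinges on the inequality $\phi(p_{s,1})\phi(p_{r,2})>\phi(p_{s,2})\phi(p_{r,1})$, which you explicitly leave open (``the main obstacle''). Without it you cannot conclude: if the inequality failed, the improving direction would be the opposite one (shortening the first source epoch), which is exactly the direction blocked by data causality, and no contradiction would result. Moreover, the region geometry you propose to invoke is aimed at the wrong boundary: the $\mathcal{R}_3$/$\mathcal{R}_4$ split is about whether $t_{r,1}$ falls before or after the end of the second source epoch (constraint (\ref{op 4:2})), not about the sign of your $\phi$-determinant, so there is no reason the locus $\phi(p_{s,1})\phi(p_{r,2})=\phi(p_{s,2})\phi(p_{r,1})$ should coincide with that boundary. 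The property that actually drives the sign is that $(t_{r,1},E_{r,0})$ lies \emph{below} the $\mathcal{R}_1$ transmitted-energy curve, i.e.\ the relay-unconstrained optimum wants to spend strictly more than $E_{r,0}$ in its first epoch.

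That observation is precisely how the paper closes the argument, by a global rather than infinitesimal comparison: it takes the hypothetical empty-buffer policy with source durations $(\xi_{s,1}^1,\xi_{s,2}^1)$, takes the $\mathcal{R}_1$ optimum (which consumes more than $E_{r,0}$ in the first relay epoch, hence has $\xi_{s,1}^2>\xi_{s,1}^1$), modifies the latter so that it too consumes exactly $E_{r,0}$ in the first relay epoch, and then interpolates; strict concavity of the objective in (\ref{ob 1}) along the segment yields a feasible policy that transmits strictly more data and retains data in the buffer at the end of the first relay epoch. If you want to salvage your local approach, you would need to prove your $\phi$-inequality from the fact that the unconstrained ($\mathcal{R}_1$) optimizer violates the relay's first-epoch energy constraint --- which amounts to a first-order version of the paper's comparison --- so it is worth supplying that step explicitly or adopting the interpolation argument directly.
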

\begin{proof}
First consider a policy of the form described above in which the relay consumes $E_{r,0}$ and has empty data buffer at the end of its first epoch. Let's denote corresponding optimal epoch durations of the source as $(\xi_{s,1}^1,\xi_{s,2}^1)$. Now, consider the optimal policy in $\mathcal{R}_1$. Clearly, in that policy consumed energy in the first epoch of the relay is larger than $E_{r,0}$. Let's denote corresponding epoch durations of the source as $(\xi_{s,1}^2,\xi_{s,2}^2)$. Clearly, $\xi_{s,1}^2>\xi_{s,1}^1$ and $\xi_{s,2}^1>\xi_{s,2}^2$ due to concavity of $g(p)$. Then, in order to make the second policy consume $E_{r,0}$ in the first epoch of the relay, we can increase first epoch duration $\xi_{s,1}^2$ and decrease second epoch duration $\xi_{s,2}^2$ of the source. We denote modified epoch durations of the second policy as $(\xi_{s,1}^3,\xi_{s,2}^3)$. Then, we have two feasible policies both consuming $E_{r,0}$ in the first relay epoch; one has source epoch durations $(\xi_{s,1}^1,\xi_{s,2}^1)$ such that data buffer of the relay is empty at the end of the first relay epoch, the other one has source epoch durations $(\xi_{s,1}^3, \xi_{s,2}^3)$. Consider the policy with epoch durations $(\xi_{s,1}',\xi_{s,1}')$ that are convex combinations of $(\xi_{s,1}^1,\xi_{s,2}^1)$ and $(\xi_{s,1}^3,\xi_{s,2}^3)$. This policy satisfies energy causality and transmits strictly more data due to strict concavity of the objective function in (\ref{ob 1}). Since $\xi_{s,1}'>\xi_{s,1}^3$, it is also feasible, and the relay has nonzero data at the end of its first epoch.
\end{proof}

Using Lemma \ref{lemma 11}, we maximize (\ref{ob 1}) subject to
\vspace{-0.05in}
\begin{eqnarray}
\label{op 4:1}
&\sum_{i=1}^2{g\left(\frac{E_{s,i-1}}{\xi_{s,i}}\right)\xi_{s,i}}=\sum_{i=1}^2{g\left(\frac{E_{r,i-1}}{\tau_{s,i}-\xi_{s,i}}\right)(\tau_{s,i}-\xi_{s,i})}\\
\label{op 4:2}
&t_{r,1}-t_{s,1}\leq \xi_{s,2}
\vspace{-0.05in}
\end{eqnarray}
where $\xi_{s,1}<t_{s,1}$, $\xi_{s,2}<\tau_{s,2}$. The constraint in (\ref{op 4:2}) is due to definition of $\mathcal{R}_3$. The optimal policy is $\mathbf{p}_s=[\frac{E_{s,0}}{\xi_{s,1}},0,\frac{E_{s,1}}{\xi_{s,2}},0]$, $\mathbf{l}_s=[\xi_{s,1},t_{s,1}-\xi_{s,1},\xi_{s,2},\tau_{s,2}-\xi_{s,2}]$, $\mathbf{p}_r=[0,\frac{E_{r,0}}{t_{s,1}-\xi_{s,1}},0,\frac{E_{r,1}}{\tau_{s,2}-\xi_{s,2}}]$, $\mathbf{l}_r=[\xi_{s,1},t_{s,1}-\xi_{s,1},\xi_{s,2},\tau_{s,2}-\xi_{s,2}]$ where $\xi_{s,1}$, $\xi_{s,2}$ are the solutions to (\ref{ob 1}) subject to (\ref{op 4:1})-(\ref{op 4:2}).
\subsubsection{Relay Region $\mathcal{R}_4$}
In this sub-region, energy causality is violated in the second epoch of the relay if the transmission policy in $\mathcal{R}_1$ is used. Therefore, the relay power level changes at $t_{r,1}$ as argued in Lemma \ref{lemma 5}. Hence there is a single power level $p_{r,1}$ in the first relay epoch and there are two power levels $p_{r,2}$ and $p_{r,3}$ in the second epoch. Note that $p_{s,1}$ and $p_{s,2}$ are different only if the relay data buffer is empty at the end of the first epoch (Lemma \ref{lemma 9}). However, whether the data buffer is empty or not is not known in advance. Hence we first solve the case that relay data buffer is not empty. Solving
\vspace{-0.05in}
\begin{eqnarray}
\label{eq 1}
&g\left(\frac{E_{s,0}+E_{s,1}}{t}\right)t=g\left(\frac{E_{r,0}}{t_{r,1}-t}\right)(t_{r,1}-t)+g\left(\frac{E_{r,1}}{\tau_{r,2}}\right)\tau_{r,2}
\vspace{-0.05in}
\end{eqnarray}
for $t=\xi_{s,1}+\xi_{s,2}$ gives the optimal total transmission duration of the source. We further have $t_{r,1}-t=l_{r,1}+l_{r,2}$ because $p_{r,1}=p_{r,2}$, and $\xi_{s,2}<t_{r,1}-t_{s,1}$ (due to definition of $\mathcal{R}_4$). The resulting optimal policy is $\mathbf{p}_s=[\frac{E_{s,0}+E_{s,1}}{t},0,\frac{E_{s,0}+E_{s,1}}{t},0]$, $\mathbf{l}_s=[\frac{E_{s,0}}{E_{s,0}+E_{s,1}}t,t_{s,1}-\frac{E_{s,0}}{E_{s,0}+E_{s,1}}t,\frac{E_{s,1}}{E_{s,0}+E_{s,1}}t,\tau_{s,2}-\frac{E_{s,1}}{E_{s,0}+E_{s,1}}t]$, $\mathbf{p}_r=[0,\frac{E_{r,0}}{t_{r,1}-t}, 0,\frac{E_{r,0}}{t_{r,1}-t},\frac{E_{r,1}}{\tau_{r,2}}]$, $\mathbf{l}_r=[\frac{E_{s,0}}{E_{s,0}+E_{s,1}}t, t_{s,1}-\frac{E_{s,0}}{E_{s,0}+E_{s,1}}t, \frac{E_{s,1}}{E_{s,0}+E_{s,1}}t,t_{r,1}-t_{s,1}-\frac{E_{s,1}}{E_{s,0}+E_{s,1}}t, \tau_{r,2}]$ where $t$ is the solution of (\ref{eq 1}).

If the above transmission policy is infeasible, i.e., $g\left(p_{s,1}\right)\xi_{s,1}<g\left(p_{r,1}\right)(t_{s,1}-\xi_{s,1})$, then we maximize (\ref{ob 1}) subject to
\vspace{-0.05in}
\begin{eqnarray}
\label{op 5:1}
\lefteqn{g\left(\frac{E_{s,0}}{\xi_{s,1}}\right)\xi_{s,1}=g\left(\frac{\hat{E}_{r,0}}{t_{s,1}-\xi_{s,1}}\right)(t_{s,1}-\xi_{s,1})}\\
\label{op 5:2}
\lefteqn{g\left(\frac{E_{s,1}}{\xi_{s,2}}\right)\xi_{s,2}}\nonumber\\
& =g\left(\frac{E_{r,0}-\hat{E}_{r,0}}{t_{r,1}-t_{s,1}-\xi_{s,2}}\right)(t_{r,1}-t_{s,1}-\xi_{s,2})+g\left(\frac{E_{r,1}}{\tau_{r,2}}\right)\tau_{r,2}
\vspace{-0.05in}
\end{eqnarray}
where $\xi_{s,1}<t_{s,1}$, $\xi_{s,2}<t_{r,1}-t_{s,1}$ and $\hat{E}_{r,0}<E_{r,0}$ with $\hat{E}_{r,0}$ corresponding to the consumed relay energy in interval $(\xi_{s,1}, t_{s,1})$. The resulting optimal policy is $\mathbf{p}_s=[\frac{E_{s,0}}{\xi_{s,1}},0,\frac{E_{s,1}}{\xi_{s,2}},0]$, $\mathbf{l}_s=[\xi_{s,1},t_{s,1}-\xi_{s,1},\xi_{s,2},\tau_{s,2}-\xi_{s,2}]$, $\mathbf{p}_r=[0,\frac{\hat{E}_{r,0}}{t_{s,1}-\xi_{s,1}},0,\frac{E_{r,0}-\hat{E}_{r,0}}{t_{r,1}-t_{s,1}-\xi_{s,2}},\frac{E_{r,1}}{\tau_{r,2}}]$, $\mathbf{l}_r=[\xi_{s,1},t_{s,1}-\xi_{s,1},\xi_{s,2},t_{r,1}-t_{s,1}-\xi_{s,2},\tau_{r,2}]$ where $\xi_{s,1}$, $\xi_{s,2}$ and $\hat{E}_{r,0}$ are the solutions to (\ref{ob 1}) subject to (\ref{op 5:1})-(\ref{op 5:2}).

There is no closed form expression for the boundary between $\mathcal{R}_3$ and $\mathcal{R}_4$, therefore to decide which relay region a corner point $(t_{r,1},E_{r,0})$ lies in, we check the following conditions:
\begin{itemize}
\item If $g(\frac{E_{s,1}}{t_{r,1}-t_{s,1}})(t_{r,1}-t_{s,1})> g(\frac{E_{r,1}}{T-t_{r,1}})(T-t_{r,1})$, the corner point $(t_{r,1},E_{r,0})$ is in $\mathcal{R}_4$. This follows from constraint (\ref{op 4:2}), that is, if $(t_{r,1},E_{r,0})\in \mathcal{R}_3$, the total transmitted data in the second epoch of source would be more than the total transmitted data in the second epoch of relay violating (\ref{op 4:2}).
\item If $g(\frac{E_{s,1}}{t_{r,1}-t_{s,1}})(t_{r,1}-t_{s,1})\leq g(\frac{E_{r,1}}{T-t_{r,1}})(T-t_{r,1})$, maximizing (\ref{ob 1}) subject to (\ref{op 5:1})-(\ref{op 5:2}) cannot give a feasible policy because (\ref{op 5:1})-(\ref{op 5:2}) require empty data buffer at $t_{s,1}$ and $t_{r,1}-t_{s,1}>\xi_{s,2}$. In this case we first solve (\ref{eq 1}). If the resulting policy is not feasible, i.e. $g(p_{s,1})\xi_{s,1}>g(p_{r,1})(\tau_{s,1}-\xi_{s,1})$, then we conclude that $(t_{r,1},E_{r,0})\in \mathcal{R}_3$. It can be argued that when $(t_{r,1},E_{r,0})\in\mathcal{R}_3$, the solution of (\ref{eq 1}) never results in a feasible policy.
\end{itemize}

\vspace{-0.1in}
\section{Numerical and Simulation Results}

In this section, we provide simulation results to illustrate average performance improvement provided by the optimal transmission policy developed in Section \ref{optimization of specific cases} compared with a simple transmission policy. We set $T=1$. We assume that for both source and relay, harvested energies are independently chosen from the exponential distribution with parameter $\lambda$ and $t_{s,1}$ and $t_{r,1}$ are uniformly distributed in the interval [0,1). Once harvested energies and harvest instances are randomly chosen, they are revealed to source and the relay, which can then carry out the corresponding transmission policies. We consider the Shannon power-rate function $g(p)=\log(1+p)$. The simple transmission policy operates in a time slotted fashion with slot durations $\frac{T}{2}=0.5$. In the first slot, the source transmits and in the second slot the relay transmits. In each slot, the transmission powers are chosen to maximize the amount of data per hop using \cite{Yang2010}, and the total transmitted data is the minimum of transmitted data in each hop. Figure \ref{fig 4} shows the average throughput as a function of $\lambda$. This figure illustrates that significant throughput improvement is possible by employing the optimal transmission policy.

We next fix $E_{s,0}=E_{s,1}=E_{r,0}=E_{r,1}=5$, $T=10$, $t_{r,1}=8$ for $g(p)=\log(1+p)$. Figure \ref{fig 5} shows the optimal throughput (total data transmitted/T) and the corresponding regions for the source/relay corner points as a function of $t_{s,1}$. We observe that all parameters affect the location of the corner points and the resulting throughput.

\vspace{-0.1in}
\section{Conclusions}
In this paper we studied energy harvesting two-hop networks to maximize data delivered to the destination by a given deadline under non-causal knowledge of the harvested energy profiles. We first identified properties of an optimal transmission policy subject to energy and data causality constraints. Then, we provided the optimal policy for two energy arrival at the source and the relay. Numerical and simulation results clearly illustrate the benefits of employing an optimal policy. Details of the optimal policy for the multi-energy arrival case as well as online policies will be discussed in subsequent work.

\begin{figure}[t]
\center
\includegraphics[scale=0.75,trim= 0 5 0 0]{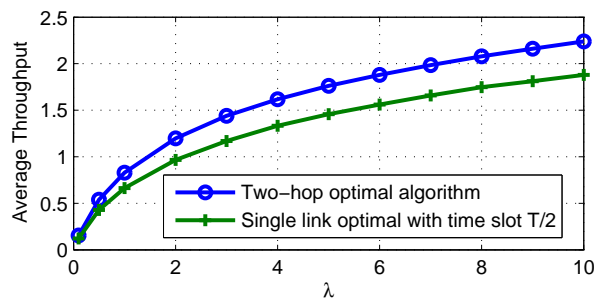}
\vspace{-0.1in}
\caption{Performance comparison of the optimal transmission  policy with a simple one that operates in a time-slotted fashion.}%
\label{fig 4}%
\vspace{-0.13in}
\end{figure}

\begin{figure}[t]
\center
\includegraphics[scale=0.75,trim= 0 15 0 5]{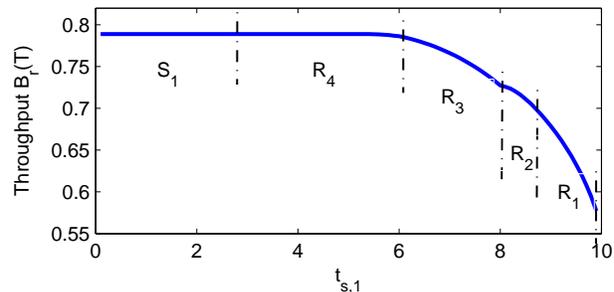}
\caption{Optimal throughput and source and relay corner point regions as a function of $t_{s,1}$ for $E_{s,0}=E_{s,1}=E_{r,0}=E_{r,1}=5$, $T=10$, $t_{r,1}=8$.}%
\label{fig 5}%
\vspace{-0.15in}
\end{figure}

\vspace{-0.1in}

\end{document}